\newcommand{\ruth}[1]{{#1}}
 \newtheorem{prop}{Proposition}[section]
 \newtheorem{definition}[prop]{Definition}
\renewcommand{\tilde}{\widetilde}
\renewcommand{\hat}{\widehat}
\newcommand{\bref}[1]{\textbf{\ref{#1}}}
\newcommand{\im}{\mathop{\mathrm{Im}}}
\newcommand{\p}[1]{|#1|}
\newcommand{\gh}[1]{\mathrm{gh}(#1)}
\newcommand{\dv}{\mathrm{d_v}}
\newcommand{\dx}{\mathrm{d_X}}
\renewcommand{\d}{\partial}
\renewcommand{\dh}{\mathrm{d_h}}
\newcommand{\cF}{\mathcal{F}}
\newcommand{\tensor}{\otimes}
\renewcommand{\geq}{\,{\geqslant}\,}
\newcommand{\binner}[2]{%
  {\langle}\kern-4.15pt{\langle}#1{,}\,#2{\rangle}\kern-4.15pt{\rangle}}
\newcommand{\commut}[2]{[#1{,}\,#2]}
\newcommand{\qcommut}[2]{[#1{,}\,#2]_*}
\newcommand{\pb}[2]{\left\{{}#1{},{}#2{}\right\}}
\newcommand{\ab}[2]{\big(#1,#2\big)}
\newcommand{\half}{\mathchoice{%
    \ffrac{1}{2}}{\frac{1}{2}}{\frac{1}{2}}{\frac{1}{2}}}
\newcommand{\ffrac}[2]{\raisebox{.5pt}%
  {\footnotesize$\displaystyle\frac{#1}{#2}$}\kern1pt}
\newcommand{\dl}[1]{\mathchoice{\ffrac{\d}{\d #1}}{\frac{\d}{\d #1}}{\ffrac{\d}{\d #1}}{\ffrac{\d}{\d #1}}}
\newcommand{\manifold}[1]{\mathscr{#1}}
\newcommand{\manM}{\manifold{M}}
\newcommand{\Liealg}{\mathfrak} 
\newcommand{\algg}{\Liealg{g}}
\newcommand{\algA}{\mathcal{A}}
\newcommand{\cC}{\mathcal{C}}
\newcommand{\fR}{\mathbb{R}}
\newcommand{\fZ}{\mathbb{Z}}
 \def\cE{\mathcal{E}}
 \def\cG{\mathcal{G}}
 \def\cH{\mathcal{H}}
 \def\cI{\mathcal{I}}
 \def\cN{\mathcal{N}}
\def\tr{{\rm Tr}}
\numberwithin{equation}{section}
\newcommand{\comment}[1]{}
\newcommand{\cD}{\mathcal{D}}
\newtheorem{example}{Example}[section]
\def\hL{\mathcal{L}}
\newcommand{\fu}{\cC^{\infty}}
\title{Notes on the $L_\infty$-approach to local gauge field theories}
\author{Maxim Grigoriev and Dmitry Rudinsky}
\affil{Lebedev Physical Institute,\protect\\
  Leninsky Ave. 53, 119991 Moscow, Russia \vspace{1em}
  \\
  Institute for Theoretical and Mathematical Physics,\protect\\
  Lomonosov Moscow State University, 119991 Moscow, Russia}  \vspace{1em}
\date{}  
\begin{document}
\newcommand{\linf}{\EuScript{H}}
\newcommand{\Id}{\text{id}}

\maketitle
\begin{abstract}
It is well known that a $Q$-manifold gives rise to an $L_\infty$-algebra structure on the tangent space at a fixed point of the homological vector field. From the field theory perspective this implies that the expansion of a classical Batalin-Vilkovisky (BV) formulation around a vacuum solution can be equivalently cast into the form of an $L_\infty$-algebra. In so doing, the BV symplectic structure determines a compatible cyclic structure on the $L_\infty$-algebra. Moreover, $L_\infty$ quasi-isomorphisms correspond to so-called equivalent reductions (also known as the elimination of generalized auxiliary fields) of the respective BV systems. Although at the formal level the relation is straightforward, its implementation in field theory requires some care because the underlying spaces become infinite-dimensional. In the case of local gauge theories,  the relevant spaces can be approximated by nearly finite-dimensional ones through employing the jet-bundle technique. In this work we study the $L_\infty$-counterpart of the jet-bundle BV approach and generalise it to a more flexible setup of so-called gauge PDEs. It turns out that in the latter case the underlying $L_\infty$-structure is analogous to that of Chern-Simons theory. In particular, the underlying linear space is a module over the space-time exterior algebra and the higher $L_\infty$-maps are multilinear.  Furthermore, a counterpart of the cyclic structure turns out to be degenerate and possibly nonlinear, and corresponds to a compatible presymplectic structure the gauge PDE, which is known to encode the BV symplectic structure and hence the full-scale Lagrangian BV formulation. Moreover, given a degenerate cyclic structure one can consistently relax the $L_\infty$-axioms in such a way that the formalism still describes non-topological models but involves only finitely-generated modules, as we illustrate in the example of Yang-Mills theory.
\end{abstract}

\tableofcontents
\section{Introduction}
Strongly homotopy associative algebras, now known as $A_\infty$-algebras, appeared first in~\cite{Stasheff:1963} in the context of topology. Their Lie algebra cousins, $L_{\infty}$-algebras, were first introduced and employed in~\cite{Zwiebach:1993ie}, as an algebraic structure underlying BV formulation of closed string field theory.  As algebraic objects, $L_{\infty}$-algebras were defined in~\cite{Lada:1993wc} and their study was initiated in~\cite{Lada:1994mn,Stasheff:1997fe,Barnich:1997ij}. 

The tight relation between $L_\infty$-algebras and the BV formulation of perturbative gauge theories was observed and employed already in~\cite{Zwiebach:1993ie} and then formalised in~\cite{Alexandrov:1995kv,Kontsevich:1997vb}  using the language of $Q$-manifolds. In brief, the tangent space at a fixed point of a homological vector field is naturally an $L_\infty$-algebra or, equivalently, formal pointed $Q$-manifolds are equivalent to $L_\infty$-algebras. 

In the field theory context, we are mostly interested in the Lagrangian gauge theories. At the classical level these are described in BV formalism as symplectic $Q$-manifolds. The symplectic structure on the BV configuration space corresponds to the cyclic structure of the associated $L_\infty$-algebra. The relation between the $L_\infty$ and BV formulation of general gauge theories is well understood by now, see e.g.~\cite{Hohm:2017pnh,Jurco:2018sby,Macrelli:2019afx,Arvanitakis:2019ald,Jurco:2020yyu,Ciric:2020hfj} as well as earlier relevant works \cite{Barnich:2004cr,Barnich:2005ru,Krotov:2006th,Losev2007FROMBI,Zeitlin:2007ttl,Mnev:2008sa}.

Despite being perturbative, the $L_\infty$-approach is quite natural from the QFT viewpoint as it makes explicit the underlying BRST first-quantized system and the perturbative interactions. In particular, the tree-level scattering amplitudes are captured by the minimal model of the $L_\infty$-algebra, see e.g. \cite{Kajiura:2003ax,Arvanitakis:2019ald,Macrelli:2019afx,Arvanitakis:2020rrk,Doubek:2020rbg}.  Moreover, at least formally, the passage to the quantum effective action can be understood in BV terms as the BV-pushforward (fiber-integration) or its equivalent in terms of loop-$L_\infty$ algebras~\cite{Zwiebach:1993ie,Krotov:2006th,Mnev:2006ch,Mnev:2008sa,Cattaneo:2009deo}.

In this work, after reviewing the relation  between the $L_\infty$ and BV approaches we elaborate on the interpretation of $L_\infty$ quasi-isomorphisms in terms of the geometry of the associated $Q$-manifolds. In particular, we show that passage to the minimal model can be understood in terms of the procedure, known as elimination of generalized auxiliary fields, giving a conventional definition of the classical $S$-matrix as the action evaluated on the perturbative solutions to the equations of motion~\cite{Arefeva:1974jv,Jevicki:1987ax}.

Our main focus is the $L_{\infty}$-approach to local gauge theories. The manifestly local  BV description of gauge field theories is well-known and operates in terms of the jet bundles. In this case the underlying BV configuration space is a suitable jet bundle equipped with the ghost degree and the BRST differential. The associated $L_{\infty}$-structure can be defined (at least locally in space-time) on sections of a natural graded vector bundle associated to a given solution (vacuum) of the underlying gauge theory. In so doing all the $L_{\infty}$-maps turn out to be multidifferential operators. The cyclic structure is determined by the BV symplectic structure.

A more flexible generalization of BV on jet bundles is known as gauge PDE or generalized AKSZ approach. In this approach a gauge theory is described as a $\fZ$-graded fiber bundle over $T[1](\text{spacetime})$, whose total space is equipped with a homological vector field $Q$ and possibly a compatible presymplectic structure. We identify an $L_\infty$-counterpart of this framework. The underlying $L_\infty$ structure is defined on the sections of a graded vector bundle over $T[1](\text{spacetime})$, and all the  $L_\infty$-maps, save for the differential, are multilinear over the space-time exterior algebra while the differential is the de Rham differential + a linear piece.

The paper is organized as follows: in Section \hyperref[sec:prel]{\textbf{2}} we recall the basic concepts of $L_\infty$-algebras and its application to field theories. In Section \hyperref[sec:qman]{\textbf{3}} we briefly review how the $L_\infty$-algebra formulation of field theories relates to formal $Q$-manifolds and BV formalism. Section \hyperref[sec:equivred]{\textbf{3}} is also devoted to describing a transition to the minimal model in geometrical terms of associated $Q$-manifolds and its relation to the procedure of elimination of generalized auxiliary fields. In Section \hyperref[sec:loc]{\textbf{4}} we elaborate the $L_\infty$-approach to local gauge theories. Conventions and notations are placed in the Appendix.

\section{Preliminaries}
\label{sec:prel}
\subsection{\texorpdfstring{$L_{\infty}$}{}-  algebras and gauge systems}

Let $\linf$ be a $\mathbb{Z}$-graded vector space $\linf:=\bigoplus_{k\in\mathbb{Z}}\linf_k$, with the degree denoted by $\gh{\cdot}$, which is equipped with a collection $l_i$, $i=1,2,3,...$ of multilinear maps:
\begin{equation}\label{formula1}
	l_i:\linf ^{\times i}\rightarrow \linf\,, \qquad \gh{l_i}=1\,.
\end{equation}
These maps are assumed to be graded symmetric and satisfying the generalized graded Jacobi identities. In particular, the first map is the differential $l_1:=\cD$, while the second map $l_2$ defines an algebra structure that satisfies a version of Jacobi identity modulo $\cD$-exact terms. For instance if only $\cD$ and $l_2$ are nonvanishing, $L_\infty$-algebra is just a differential graded Lie algebra (note that in the present exposition the degree is suspended by $1$ with respect to the standard choice for differential graded Lie algebras). For the first three maps the generalized Jacobi identity reads as
\begin{equation}\label{c11}
	\cD(\cD(\phi))=0,
\end{equation}
\begin{equation}\label{c12}
	\cD(l_2(\phi_1,\phi_2))+l_2(\cD(\phi_1),\phi_2)+(-1)^{|\phi_1|}l_2(\phi_1,\cD(\phi_2))=0,
\end{equation}
\begin{align}\label{c13}
	l_3(l_2(\phi_1,\phi_2),\phi_3)+(-1)^{|\phi_2||\phi_3|}l_3(l_2(\phi_1,\phi_3),\phi_2)+(-1)^{|\phi_1|(|\phi_2|+|\phi_3|)}l_3(l_2(\phi_2,\phi_3),\phi_1)\\
	+\cD(l_3(\phi_1,\phi_2,\phi_3))+l_3(\cD(\phi_1),\phi_2,\phi_3)+(-1)^{|\phi_1|}l_3(\phi_1,\cD(\phi_2),\phi_3)\nonumber\\
	+(-1)^{|\phi_1|+|\phi_2|}l_3(\phi_1,\phi_2,\cD(\phi_3))=0,\nonumber
\end{align}
while the general formula can be found in e.g.~\cite{Jurco:2018sby,Macrelli:2019afx}. The first of the above relations implies that $\cD$ is a differential and hence $\linf$ is a cohomological complex. The second says that $\cD$ is a derivation of $l_2$, while 
the third implies that $l_2$ defines a Lie algebra structure in the cohomology of $\cD$.

The data of the $L_\infty$ algebra encodes a gauge system or, more precisely, a gauge system formulated in terms of BV-BRST formalism, i.e. where in addition to equations of motion one specifies gauge generators, gauge for gauge generators and higher gauge generators. An element $\varphi\in \linf_0$ is called a solution if 
\begin{equation}\label{formula5}
	\sum_{i\geq 1}\frac{1}{i!}l_i(\varphi,\dots,\varphi)=0,
\end{equation}
so that the above relations can be regarded as equations of motion. It is easy to check that the following transformation
\begin{equation}\label{fo6}
	\delta_{\epsilon}\varphi:=\sum_{i\geq 0}\frac{1}{i!}l_{i+1}(\epsilon,\varphi,\dots,\varphi)\,,
\end{equation}
where $\epsilon$ is a generic element of $\linf_{-1}$, which is identified as a gauge parameter, is a symmetry of\eqref{formula5}.  Analogously, one defines gauge for gauge symmetries
\begin{equation}
	\delta_{\epsilon_{-1}}\epsilon:=\sum_{i\geq 0}\frac{1}{i!}l_{i+1}(\epsilon_{-1},\varphi,\dots,\varphi)\,, 
\end{equation}
where $\epsilon_{-1}$ is a generic element of $\linf_{-2}$, regarded as a gauge for gauge parameter.  Generalization to higher gauge symmetries is straightforward.

It is important to note that the above representation of equations of motion and gauge transformations is not fully-applicable to systems containing fermionic fields and/or fermionic gauge parameters.  To incorporate fermions one equips $\linf$ with an addition structure --  $\mathbb{Z}_2$ Grassman degree  which defines a graded symmetry properties of the maps $l_i$. In what follows, we assume that $\p{-}$ denotes the Grassmann degree. In particular, if fermions are not present the Grassmann degree is simply the ghost degree modulo 2. However, the subtlety is that if fermions are present it is difficult to interpret $\varphi\in \linf_0$ as a classical field configuration because components of $\varphi$ are real numbers (recall that $\linf$ is a linear space) rather then anticommuting generators of the Grassmann algebra. Recall, that at the classical level fermionic fields are described by anticommuting variables. Nevertheless,  at the purely algebraic level all the information of the underlying gauge system is captured by the $L_\infty$-maps. Moreover, fermionic  fields naturally appear if one passes to the associated BV-BRST description, as we will recall in the next Section.

In applications, we are mostly interested in Lagrangian systems which, in this approach, correspond to cyclic $L_\infty$ algebras.  By definition $(\linf,l_i)$ is cyclic if it is equipped with a  bilinear form:
\begin{equation}\label{formula6}
	\kappa:\linf\tensor \linf\rightarrow \mathbb{R}\,,\quad gh(\kappa)=-1\,, \quad  \kappa(\phi_1,\phi_2)=(-1)^{(|\phi_1|+1)(|\phi_2|+1)}\kappa(\phi_2,\phi_1)\,,
\end{equation} 
compatible with the $L_\infty$-structure in the following way:
\begin{equation}\label{3}
\kappa(\phi_1,l_i(\phi_2,\dots,\phi_{i+1}))=(-1)^{|\phi_1||\phi_2|}\kappa(\phi_2,l_i(\phi_1,\phi_3,\dots,\phi_{i+1}))\,.
\end{equation}
Note that $\kappa(\phi_1,\phi_2)\neq 0$ iff the degree of $\phi_1$ is equal to $1-|\phi_2|$, therefore $(-1)^{(|\phi_1|+1)(|\phi_2|+1)}=1$ and then $\kappa$ is in fact symmetric, i.e. $\kappa_{AB}=\kappa_{BA}$, where $\kappa(e_A,e_B)=\kappa_{AB}$ is a matrix of bilinear form and $e_A$ denotes a basis in $\linf$. Note that one can equivalently regard $\kappa$ as a symplectic structure of degree $-1$. Such an interpretation appears useful in discussing the relation to BV formalism.

Given a cyclic $L_{\infty}$ algebra $\linf$ one can define a formal power series $S[\varphi]$ on $\linf_0$ as follows
\begin{equation}\label{formula9}
	S[\varphi]:=\sum_{i\geq 1}\frac{1}{(i+1)!}\kappa(\varphi,l_i(\varphi,\dots,\varphi))\,.
\end{equation}
It is interpreted as an action functional of the underlying gauge theory. In particular, \eqref{formula5} describes the stationary points of \eqref{formula9} while \eqref{fo6} describes its gauge symmetries.

\begin{example}\label{ex:CS}\normalfont  As an instructive example let us consider the $L_\infty$-formulation of Chern-Simons theory, see e.g. \cite{Hohm:2017pnh,Ciric:2020hfj}. Suppose we have a compact orientable manifold $X$, $dim(X)=3$ and denote by $(\Omega^{\bullet}(X),d)$ its de Rham complex. Let $\algg$ be a real Lie algebra equipped with an invariant inner product. We take as $\linf=\Omega^{\bullet}(X)\tensor \algg[1]$, i.e. differential forms with values in $\algg[1]$. For instance, an element $\varphi\in \linf_0$ can be written as $\varphi=dx^{\mu}\varphi^a_{\mu}t_\alpha$, where $t_\alpha$ denote a basis in $\algg[1]$.

The $L_\infty$ maps are defined as follows:
\begin{equation}\label{e11}
	l_1(\phi):=\mathsf{d}\phi, \qquad 
	l_2(\phi_1,\phi_2):=(-1)^{|\phi_1|}[\phi_1,\phi_2]\,,
\end{equation}
where  $[\phi_1,\phi_2]$ denotes a Lie bracket in $\algg$ extended to $\linf$ by linearity over $\Omega^{\bullet}(X)$.  The $L_\infty$ relations are satisfied thanks to $d^2=0$, graded Leibnitz rule for $d$, and the Jacobi identity for $\algg$.
The cyclic structure on $\linf$ is defined as
\begin{equation}\label{CHS1}
	\kappa(\phi_1,\phi_2):=(-1)^{|\phi_1|}\int_X \tr (\phi_1 \wedge \phi_2)\,,
\end{equation}
where $\tr$ denotes an invariant bilinear product in $\algg$ extended to $\linf$ by linearity over $\Omega^\bullet(X)$. It is symmetric for elements with degree $|\phi_1|+|\phi_2|=1$ and equal to zero for others, i.e. (\ref{CHS1}) has the same symmetry as (\ref{formula6}).

In this way we have arrived at the cyclic $L_{\infty}$-algebra $\left(\linf,l_i,\kappa\right)$, where $l_i$ are given by \eqref{e11}. The associated classical action \eqref{formula9} takes the following form:
\begin{equation}
	S_{CS}=\frac{1}{2}\int_{X}<\varphi,\mathsf{d}\varphi+\frac{1}{3}[\varphi,\varphi]>.
\end{equation}	
and is precisely the action of Chern-Simons theory. The equations of motion \eqref{formula5} read as:
\begin{equation}
	\mathsf{d}\varphi+\frac{1}{2}[\varphi,\varphi]=0\,,
\end{equation}
while the gauge transformations are as follows:
\begin{equation}
	\delta_{\epsilon}\varphi=\mathsf{d}\epsilon+[\varphi,\epsilon],
\end{equation}
with $\epsilon \in \linf_{-1}=\Omega^0(X)\otimes \algg[1]$. 

Let us note that the assignment of ghost degree that we employed in the above example is not so common in the literature, though~\cite{Barnich:2004cr}. An alternative is to suspend the degree by 1 so that the ghost degree is simply the form degree, see e.g. ~\cite{Jurco:2019bvp,Jurco:2020yyu,Ciric:2020hfj}. Although our choice is slightly unnatural in this example it is more useful for explaining the relation with the BV formulation.
\end{example}

\subsection{$L_\infty$-morphisms and minimal models}

Natural morphisms of $L_\infty$ algebras are those that preserve the structure up to homotopy. What is important is that $L_{\infty}$ - morphism is not a linear map in general. More specifically, $\chi:(\linf,l_i)\rightarrow (\linf^{\prime}, l_i^{\prime}) $ between two $L_{\infty}$-algebras $(\linf,l_i)$ and $(\linf^{\prime},l^{\prime}_i)$ is a set of multilinear graded antisymmetric maps $\chi_i:\linf^{\times i}\rightarrow \linf^{\prime}$ of degree $0$. The conditions that these maps satisfy look rather cumbersome but, as we are going to see in the next section,  can be compactly written in the language of $Q$-manifolds so that we refrain from giving their explicit form here and refer instead to e.g.~\cite{Kajiura:2003ax,Macrelli:2019afx}.

The $L_{\infty}$-morphism $\chi$ is called \textit{quasi-isomorphism} if the map $\chi_1:\linf\rightarrow \linf^{\prime}$ is a quasi-isomorphism of the underlying complexes, i.e. $H^{\bullet}_{\mathcal{D}}(\linf)\cong H^{\bullet}_{\mathcal{D}^{\prime}}(\linf^{\prime})$.
It is important to note that just like complexes of vector spaces quasi-isomorphic $L_{\infty}$-algebras are not necessarily isomorphic as linear spaces and among quasi-isomorphic $L_{\infty}$-algebras there is a minimal one, called minimal-model~\cite{Sullivan:1977fk,Kadeishvili1980}. More precisely:
\begin{definition}
A minimal model of $(\linf,l_i)$ is an $L_{\infty}$-algebra $(\linf^{\prime},l^{\prime}_{i})$ such that $(\linf,l_i)$ and $(\linf^{\prime},l^{\prime}_{i})$ are quasi-isomorphic and 
$l^{\prime}_{1}=0$.
\end{definition}
In particular, as a linear space $\linf^{\prime}$ is isomorphic to the cohomology $H^{\bullet}_{\mathcal{D}}(\linf)$.

Let us briefly recall how the minimal model, i.e. maps $l^{\prime}_{i\ge 2}$ defined on $\linf^\prime= H^{\bullet}_{\mathcal{D}}(\linf)$,  can be found for a given $L_{\infty}$-algebra.  To this end let us define the following three linear maps: the first two are the embedding $\mathbf{e}:\linf^\prime \hookrightarrow \linf$ (recall that $\linf^\prime \cong H^{\bullet}_{\mathcal{D}}(\linf)$) and the projection $\mathbf{p}:\linf\twoheadrightarrow \linf^\prime$, which are both of degree $0$ and satisfy $\textbf{p}\circ\textbf{e}=\Id$, where $\Id$ is an identity map on $\linf^\prime$. The third map is a \textit{contracting homotopy}  $\textbf{h}:\linf\rightarrow \linf$ that has degree $-1$ and satisfies:
\begin{equation}\label{MM4}
	\text{id}_\linf-\textbf{e}\circ\textbf{p}=\textbf{h}\circ \mathcal{D}+\mathcal{D}\circ\textbf{h}\,.
\end{equation}
The quasi-isomorphism $\chi$ between $(\linf,l_i)$ and $(\linf^{\prime},l^{\prime}_{i\ge 2})$  is now determined by the following maps (where we only give explicitly some of them)
\begin{equation}\label{MM1}
\chi_1(\phi'_1):=\textbf{e}(\phi'_1),
\end{equation}
\begin{equation}\label{MM2}
	\chi_2(\phi'_1,\phi'_2):=-(\textbf{h}\circ l_2)(\chi_1(\phi'_1),\chi_1(\phi'_2))
\end{equation}
\begin{align}
    \chi_3(\phi'_1.\phi'_2,\phi'_3):=-\sum_{Permutations}\frac{1}{2!}(\textbf{h}\circ l_2)\left((\chi_{1}(\phi'_1),\chi_{2}(\phi'_{2},\phi'_{3}))+(\chi_2(\phi'_1,\phi'_2),\chi_1(\phi'_3)\right)-\\
   -\sum_{Permutations}\frac{1}{3!}(\textbf{h}\circ l_3)(\chi_1(\phi'_1),\chi_2(\phi'_2),\chi_3(\phi'_3))\nonumber
\end{align}
where $\phi'_1, \phi'_2, \phi'_3\in \linf^{\prime}$.
The $L_{\infty}$-structure on $\linf^{\prime}$ is determined as
\begin{equation}\label{MM3}
	l'_1(\phi'_1):=0,\qquad 
	l'_2(\phi'_1,\phi'_2):=(\textbf{p}\circ l_2)(\chi_1(\phi'_1),\chi_1(\phi'_2)),\qquad \ldots
\end{equation}
where, again, we only gave the first two maps. All higher maps are defined recursively in a similar way, see e.g~\cite{Macrelli:2019afx}.

\section{Q-manifolds and \texorpdfstring{$L_{\infty}$}{}-algebras}
\label{sec:qman}

In this section we recall the basics of $Q$-manifolds and the relation between $Q$-manifolds and $L_\infty$-algebras.
\subsection{$Q$-manifolds}
\begin{definition}
	A $Q$-manifold is a $\mathbb{Z}$-graded supermanifold endowed with an odd vector field $Q$ of degree $1$ such that $[Q,Q]=2Q^2=0$. Such a vector field is called homological.
\end{definition}
Let us give here a couple of standard examples:
\begin{example}\label{ex1}
\normalfont Given a (graded) manifold $M$, the shifted tangent bundle $T[1]M$ is naturally a $Q$-manifold, with the $Q$-structure being the de Rham differential $d_M=\theta^i\frac{\partial}{\partial x^i}$ where $x^i,i=1,\cdots,d$ are coordinates on $M$ and $\theta^i$ are coordinates on the fibers of $T[1]M$.
\end{example}
\begin{example}
\normalfont Let $\algg$ be a Lie algebra. Consider $\algg[1]$ as $\mathbb{Z}$-graded supermanifold. Consider a $Q$-field $Q=-\frac{1}{2}f^{\alpha}_{\beta\gamma}c^{\beta}c^{\gamma}\frac{\partial}{\partial c^{\alpha}}$,
where $c^\alpha$ are linear coordinates on $\algg[1]$ and  $f^{\alpha}_{\beta\gamma}$ the structure constants of $\algg$. The nilpotency of $Q$-field is equivalent to the Jacobi identity in $\algg$.
\end{example}

Natural maps of $Q$-manifolds, sometimes called $Q$-maps, are maps of the underlying graded supermanifolds that respect the $Q$ structure, i.e. $\phi:M_1 \to M_2$ satisfies $\phi^*\circ Q_2=Q_1 \circ  \phi^*$. The $Q$-map condition encodes the properties of $L_{
\infty}$-morphism~\cite{Kontsevich:1997vb}. For more details on morphisms of graded manifolds see e.g.~\cite{Voronov:2019mav}. 

$Q$-manifolds give an alternative way to describe gauge systems. Roughly speaking $Q$-manifold is a geometrical object that corresponds to a BV formulation, where a symplectic structure is not specified or even does not exist. From the field theory perspective this corresponds to describing gauge systems at the level of equations of motion. Let us recall how the data of a gauge system is encoded in a given $Q$-manifold. For future convenience we do it in geometrical terms following~\cite{Grigoriev:2019ojp}. An explicit description in terms of coordinates was initially in~\cite{Barnich:2004cr,Lyakhovich:2004xd}.

Solutions of the underlying gauge system, described by $(M,Q)$, can be identified as points of the body of $M$, where $Q$ vanishes. These can be identified as $Q$ maps from a trivial $Q$-manifold $(pt,0)$ to $(M,Q)$. In other words $\sigma:pt\rightarrow M$  is a solution if 
\begin{equation}\label{m1}
 	\sigma^*\circ Q=0\,.
\end{equation}
In other words $\sigma(pt)$ is a fixed point of $Q$. Gauge transformation of a given map $\sigma$ can be defined as 
\begin{equation}\label{m3}	\delta_{\epsilon_\sigma}\sigma^*=\epsilon_\sigma^*\circ Q,
\end{equation}
where the gauge parameter map $\epsilon_\sigma:C^{\infty}(M)\rightarrow \mathbb{R}^1$ has degree $-1$ and satisfies
\begin{equation}\label{m4}
	\epsilon_\sigma^*(fg):=\epsilon_\sigma^*(f)\sigma^*(g)+(-1)^{|f|}\sigma^*(f)\epsilon_\sigma^*(g),
\end{equation}
for all $f,g \in C^{\infty}(M)$. In fact it is enough to consider gauge parameters of the following form: $\xi^*_\sigma=\sigma^*\circ Y$, where $Y$ is a vector field on $M$ of degree $-1$ so that \eqref{m4} is satisfied by construction. It is easy to check that~\eqref{m3} defines an infinitesimal deformation of the solution $\sigma$. Gauge for gauge symmetries can be defined in analogously.

\subsection{$Q$-manifolds and $L_\infty$-algebras}

Given an $L_{\infty}$-algebra $(\linf,l_i)$ there is a canonically associated formal pointed $Q$-manifold that encodes all the data of $(\linf,l_i)$.  The construction is as follows: any $\phi\in \linf$ can be represented in the form
\begin{equation}
	\phi=\phi^Ae_A,
\end{equation}
where $\phi^A\in\mathbb{R}$ and $e_A$ denote a basis in $\linf$. Let $\psi^A$ denote a dual basis in the graded dual of $\cH^*$, i.e. $	gh(\psi^A):=-gh(e_A)$ and $ |\psi^A|:=|e_A|$, where $ |\ | $ denotes Grassmann parity.
Now let us consider the (suitable completion of the) graded commutative algebra generated by $\psi^A$ as an algebra of functions on a  graded supermanifold $M_\linf$, which is a graded supermanifold associated with $\linf$. In so doing, $\psi^A$ are interpreted as homogeneous linear coordinates on $M_\linf$. More invariantly, the algebra of functions on $M_\linf$ is the graded-symmetric tensor algebra of $\cH^*$.

Now we describe a $Q$-structure on $M_\linf$. Let us consider functions on $M_\linf$ with values in $\linf$, i.e. $C^{
\infty}(M_\linf)\tensor \linf$. This space has  a distinguished element $\Psi$, often called string field,
which can be defined in local coordinates 
as follows \footnote{Here by some abuse of notations we do not distinguish the degree in $\linf$ and $\linf^*$ and the total degree in $C^{
\infty}(M_\linf)\bigotimes \linf$. Though sometimes it can be useful to distinguish the degrees.}
\begin{equation}\label{formula24}
    	\Psi:=\psi^Ae_A, \quad gh(\Psi):=gh(\psi^A)+gh(e_A)=0,\quad |\Psi|=0.
\end{equation}
It is a linear function on $ M_\linf $ with values in $\linf$ \footnote{There is an alternative convention such that $gh(\Psi)=1$ and this corresponds to a different choice of grading of maps \eqref{formula1}, see e.g.~\cite{Grigoriev:2006tt,Zeitlin:2007vv,Zeitlin:2007ttl}.}. Because  $\Psi$ is associated with a graded space, we have the following decomposition
\begin{equation}\label{for26}
	\Psi=\sum_k \Psi^{(k)}=\sum_k \psi^{A_k}e_{A_k}\,, \qquad \gh{\psi^{A_k}}=k\,.
\end{equation}
In particular, $\Psi^{(0)}$ contains physical (=ghost degree $0$) fields.

Now the $Q$-structure on $M_\linf$ can be defined as follows:
\begin{equation}\label{c31}
	Q(\Psi):=\tilde{l}_1(\Psi)+\frac{1}{2}\tilde{l}_2(\Psi,\Psi)+\frac{1}{3!}\tilde{l}_3(\Psi,\Psi,\Psi)+\dots\,.
\end{equation}
where $\tilde{l}_i$ is an extension of the $L_\infty$-maps $l_i$ from $\linf$ to 
$C^{\infty}(M_\linf)\tensor\linf$ by linearity. We define this extension as follows 
$\tilde{l}_i(\Psi,\dots,\Psi)=\psi^{B_1}\dots\psi^{B_i}U^C_{B_1\dots B_i}e_C$, where $l_i(e_{B_1},\dots,e_{B_i})=U^C_{B_1\dots B_i}e_C$. 
This defines the action of $Q$ on coordinates and hence on generic functions on $M_\linf$. It is instructive to write \eqref{c31} in terms of components:
\begin{equation}\label{MM5}
	Q^A(\psi)e_A=\psi^A\mathcal{D}^B_{A}e_B+\frac{1}{2}\psi^A\psi^B U^C_{AB}e_C+\dots.
\end{equation}
The nilpotency condition $Q^2(\Psi)=0$ is satisfied as a consequence of the generalized Jacobi identity.

By construction, the above $Q$ has degree $1$, doesn't have  a constant term, and is nilpotent on account of the $L_\infty$-conditions. In fact $Q^2=0$ is often considered as a compact way to write the underlying $L_\infty$-conditions. It is also clear that $Q$-structure determined by \eqref{c31} encodes the entire structure of $(\linf,l_i)$. In other words formal pointed $Q$-manifolds  (with vanishing constant term in $Q$) and $L_\infty$-algebras are equivalent objects. 

However, general $Q$-manifolds 
can not be obtained this way. Nevertheless,  given a smooth $Q$-manifold, the structure of $L_\infty$-algebra is induced on its tangent space at any fixed point of $Q$~\cite{Alexandrov:1995kv}. More precisely, let $p\in M$ belongs to the zero locus of $(M,Q)$, i.e. $Q|_{p}=0$. Choose local coordinate system $\psi^A$ centered in $p$ and consider the Taylor expansion of the $Q$-field:
\begin{equation}\label{l1}
	Q^A(\psi)=\sum_{i\geq 1}\frac{1}{i!}Q^A_{B_1\cdots B_i}\psi^{B_1}\cdots\psi^{B_i}\,.
\end{equation}  
Note that the constant term vanishes as $p$ belongs to the zero locus. Coefficients $Q_i$ in \eqref{l1}) are graded-symmetric and induce the following maps 
\begin{equation}
	l_i:T_p^{\otimes i}M\rightarrow T_pM,
\end{equation}
where $l_i(e_{A_1},\dots,e_{A_i})=Q^B_{A_1,\dots,A_i}e_B$ and $e_A$ are basis elements in $T_pM$. So we have an $L_{\infty}$-algebra structure on the tangent space of $M$ at $p$ and $L_{\infty}$-conditions follow from $Q^2=0$. In this way one can also endow $TM$ pulled back to the zero-locus of $Q$ with the structure of an $L_{\infty}$-bundle.

It is important to note that maps $l_i$ of the above $L_\infty$-algebra on $T_pM$ depend on the choice of coordinate system $\psi^A$ for $i>1$. Because changes of coordinates result in $L_\infty$-isomorphisms the equivalence class of $L_\infty$ structure on $T_pM$ is well-defined. As an alternative, the $L_\infty$ structure can be made coordinate-independent via introducing an auxiliary torsion-free affine connection in $TM$ and lifting $Q$ to a section of the bundle, whose fiber at $p\in M$ is the Lie algebra of formal vector fields on $T_pM$, giving a globally well-defined  curved $L_\infty$-structure on $\Gamma(TM)$. The lift can be performed in a simple way using the technique originating from Fedosov quantization and is described in Appendix~\bref{app:Fedosov}. This gives an alternative derivation of the 
curved $L_\infty$-structure on $\Gamma(TM)$ constructed in~\cite{Mehta_2015,Seol:2021tol}, which in turn originates from ~\cite{kapranov1999rozansky}. For a recent discussion of these structures in the context of field theory see~\cite{Chatzistavrakidis:2023otk}.
Let us also note that a coordinate free way to define  an $L_\infty$-maps determined by $Q$ is to employ the higher derived brackets~\cite{Voronov_2005}, as we do in a more general setup in Section~\bref{sec:gPDE-Li}.

\subsection{Symplectic/cyclic structures and BV formalism}

From the geometrical perspective BV system is a symplectic $Q$-manifold, i.e. a $Q$-manifold $M$ equipped with a $Q$-invariant symplectic structure $\omega$ of degree $-1$.  It follows from $L_Q\omega=0$, where $L$ denotes Lie derivative, that there exists $S_{BV}$ such that \begin{equation}\label{c45}
	i_Q\omega+dS_{BV}=0\,.
\end{equation}
In other words, $Q$ is a Hamiltonian vector field with Hamiltonian $S_{BV}$. If $V_1,V_2$ are Hamiltonian vector fields with Hamiltonians $H_1,H_2$ respectively. The odd Poisson bracket  of the Hamiltonians is defined as
\begin{equation}
\ab{H_1}{H_2}=\omega(V_1,V_2)\,.
\end{equation}
If $\omega$ is invertible (as in our setup) the Poisson bracket is defined for any functions. For $S_{BV}$ one has
\begin{equation}
	(S_{BV},S_{BV})=\omega(Q,Q)=0.
\end{equation}
This is the classical master equation of BV formalism and it ensures that 
$Q=\ab{S_{BV}}{-}$ is nilpotent.

Let us now turn to the extra structure induced by $\omega$ on $T_pM$ for $p$
belonging to the zero locus. It turns out that in addition to the induced $L_\infty$-structure, $T_pM$ is equipped with a symplectic form of degree $-1$. To discuss compatibility between the $L_\infty$ and the symplectic structure it is convenient to employ Darboux coordinates centered in $p$ so that  $\omega$ takes the form 
\begin{equation}\label{s1}
    \omega=\frac{1}{2}d\psi^A d\psi^B \omega_{AB}\,,
\end{equation}
with $\omega_{AB}$ constant and has the following symmetry property: $\omega_{AB}=(-1)^{(|A|+1)(|B|+1)}\omega_{BA}$. Because $\gh{\omega}=1$ it is nonvanishing only if $|A|=|B|+1~mod~2$ and hence $\omega_{AB}=\omega_{BA}$ so that in our setup $\omega_{AB}$ has the same symmetry property as a cyclic product $\kappa_{AB}$. Furthermore, $L_{Q}\omega=0$ implies the cyclic property of the symplectic form on $T_pM$.

Other way around, given a cyclic $L_\infty$-algebra $(\linf,l_i,\kappa)$ it determines a unique symplectic $Q$-manifold. Indeed, $\kappa$ defines the constant symplectic structure on the formal $Q$-manifold $(M_\linf,Q)$ and the cyclicity condition implies that it is $Q$-invariant. Furthermore, the BV master action can be written explicitly in terms  of the structure maps of the underlying $L_\infty$-algebra. More precisely, in terms of the string field $\Psi$ the BV-action takes the form:
\begin{equation}\label{c44}
	S_{BV}=\sum_{n\geq 2}\frac{1}{n!}\omega_{B_1 C}Q^C_{B_{2\dots}B_n}\psi^{B_1}\dots\psi^{B_n},
\end{equation}
Note that the BV-action (\ref{c44}) has the same form as an action (\ref{formula9}), however the action (\ref{formula9}) contains only elements of degree zero while the BV-action includes elements of different degrees (fields, ghosts, and their conjugate antifields).

\subsection{Equivalent reduction  of \texorpdfstring{$Q$}{}-manifolds and minimal models}
\label{sec:equivred}

We now recall the notion of equivalent reductions of $Q$ manifolds and show that in the case of formal $Q$-manifolds this corresponds to quasi-isomorphic reduction of the respective $L_\infty$-algebras.  It is useful to introduce:
\begin{definition}
A contractible $Q$-manifold is that of the form $(T[1]V,d_V)$, where 
$V$ is an $\fZ$-graded vector space understood as a graded supermanifold and $d_V$ is the de Rham differential on $V$
\end{definition}
The formal $Q$-manifold version of this definition was in~\cite{Kontsevich:1997vb}. The coordinate version of this definition was, somewhat implicitly, employed in the BRST approach to field theory, see e.g.~\cite{Barnich:1995ap,Brandt:1996mh,Barnich:2004cr}.  

Let us also note that the direct product of $Q$-manifolds $(M_1,Q_1)$ and $(M_2,Q_2)$ is naturally a $Q$-manifold, whose $Q$-structure is uniquely determined by the property $Q(fg)=Q_1(f)g+(-1)^{|f|}fQ_2(g)$ for any functions $f\in \fu(M_1)$ and $g\in \fu(M_2)$ (here by some abuse of notations we identify functions on $M_{1,2}$ and their pull-backs to the direct product). We also need a notion of a $Q$-bundle which is a locally-trivial fiber-bundle where the total space and the base are $Q$-manifolds and the projection is a $Q$-map~\cite{Kotov:2007nr} (see also \cite{Mehta:2007rgt,Voronov:2009nr}). 

\begin{definition}
	Given two $Q$-manifolds $(M_1,Q_1)$ and $(M_2,Q_2)$, then $(M_2,Q_2)$ is called an equivalent reduction of $(M_1,Q_1)$ if $(M_1,Q_1)$ is a locally-trivial $Q$-bundle over 
	$(M_2,Q_2)$ which admits a global $Q$-section and whose fibers are contractible $Q$-manifolds. 
\end{definition}
Note that locally trivial $Q$-bundle is locally isomorphic to a direct product of the base and the typical fiber, where the direct product means the direct product of $Q$-manifolds. 

The above definition is taken from~\cite{Grigoriev:2019ojp} but 
it originates from the coordinate definitions  employed in the BRST approach to field theory, see e.g.~\cite{Barnich:1995ap,Brandt:1996mh} and~\cite{Henneaux:1990ua,Dresse:1990dj} for earlier Lagrangian version. In particular, the coordinate local version of the equivalent reduction was in~\cite{Barnich:2004cr}. The notion of equivalence generated by the above equivalent reduction is very restrictive and one might be interested in a more general homotopy equivalence of $Q$-manifolds, see e.g. recent discussion of the equivalence of BV systems~\cite{Simao:2021xgw}. Nevertheless, the above notion is sufficient for our applications in local field theory.

A  practical way to find an equivalent reductions is as follows~\cite{Barnich:2004cr}: let $w^a$ be independent functions such that 
$w^a$ and $Qw^a$ are also independent. Then, at least locally, the submanifold singled out by $w^a=0$ and $Qw^a=0$ is an equivalent reduction. Note that $Q$ is tangent to the submanifold and hence makes it into a $Q$-manifold. In the infinite-dimensional case one should also make sure that the respective bundle is indeed trivial (constructing the trivialization may take outside the functional class). 
\begin{prop}
Let $(M_\linf,Q)$ be a formal $Q$-manifold associated with an $L_{\infty}$-algebra $(\linf,l_i)$, then there exists an equivalent reduction $(N,Q|_N)$, which is a $Q$-manifold associated with a minimal model of $(\linf,l_i)$.
\end{prop}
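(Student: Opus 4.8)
The plan is to realise a minimal model geometrically as a $Q$-submanifold of $(M_\linf,Q)$ cut out by a pair of ``contractible'' coordinate functions, using the practical criterion for equivalent reductions recalled above, and then to check that the $L_\infty$-algebra it carries is minimal. First I would fix a Hodge-type decomposition of the complex $(\linf,\cD)$ adapted to a contracting homotopy (with the standard side conditions $\mathbf{h}^2=0$, $\mathbf{h}\circ\mathbf{e}=0$, $\mathbf{p}\circ\mathbf{h}=0$, which can always be arranged): $\linf=\mathcal{H}\oplus\mathcal{L}\oplus\mathcal{R}$, where $\mathcal{H}=\im\mathbf{e}\cong H^\bullet_\cD(\linf)$, $\mathcal{L}=\im(\cD\circ\mathbf{h})$, $\mathcal{R}=\im(\mathbf{h}\circ\cD)$, the differential $\cD$ vanishes on $\mathcal{H}\oplus\mathcal{L}$, and $\cD$ restricts to an isomorphism from $\mathcal{R}$ onto $\mathcal{L}$. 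A homogeneous basis adapted to this splitting gives linear coordinates $(y^i,w^a,u^a)$ on $M_\linf$, dual to $\mathcal{H},\mathcal{L},\mathcal{R}$ respectively. Since by \eqref{MM5} the linear part of $Q$ is the transpose of $\cD$, in these coordinates $Q w^a=u^a+O(\psi^2)$ (after a harmless linear redefinition of the $u$'s), while $Q u^a$ and $Q y^i$ start at quadratic order.

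Next I would invoke the practical criterion with the functions $w^a$. Their differentials $dw^a$, together with $d(Q w^a)=du^a+\dots$ and $dy^i$, are linearly independent at the origin; hence the $w^a$ and the $Qw^a$ are independent functions — independence being an open condition, it persists in spite of the higher-order tail of $Q$ — and $N:=\{w^a=0,\ Qw^a=0\}$ is an equivalent reduction of $(M_\linf,Q)$, with $Q$ automatically tangent to $N$ because $Q(Qw^a)=Q^2w^a=0$. The submanifold $N$ is parametrised by the $y^i$, i.e. by $\mathcal{H}\cong H^\bullet_\cD(\linf)$, while the contractible fibres account exactly for the directions $\mathcal{L}\oplus\mathcal{R}$: in the coordinates $(y^i,w^a,\bar w^a:=Qw^a)$ the fibre $Q$-structure is $w^a\mapsto\bar w^a\mapsto 0$, the de Rham differential of $T[1]V$ with $V$ spanned by the $w^a$, and $N\hookrightarrow M_\linf$ is a global $Q$-section. (In the genuinely infinite-dimensional applications one must check in addition that this $Q$-bundle is globally trivial and that the trivialization stays within the relevant functional class; for a formal pointed $Q$-manifold this is automatic.)

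Finally I would identify $(N,Q|_N)$ with a minimal model. Because $Q$ has no linear term along the $\mathcal{H}$-directions, $Q|_N$ has vanishing linear part, so $(N,Q|_N)$ is the formal $Q$-manifold of an $L_\infty$-algebra $(\linf',\bar l_i)$ with $\bar l_1=0$ and $\linf'\cong H^\bullet_\cD(\linf)$ as a graded vector space. The inclusion $N\hookrightarrow M_\linf$ is a $Q$-map, hence an $L_\infty$-morphism $(\linf',\bar l_i)\to(\linf,l_i)$ whose linear part is the embedding $\mathbf{e}$; since $\mathbf{e}$ is a quasi-isomorphism of the underlying complexes, this morphism is an $L_\infty$-quasi-isomorphism, and therefore $(\linf',\bar l_i)$ is a minimal model of $(\linf,l_i)$. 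Comparing this inclusion with the transfer morphism $\chi$ of \eqref{MM1}--\eqref{MM3} one further sees that $\bar l_i$ may be taken to be the explicitly transferred maps \eqref{MM3}, although this is not needed for the statement. The only genuinely delicate points are the bookkeeping in the first two steps — setting up the Hodge decomposition in the graded category and controlling the higher-order tail of $Q$ in the independence check — together with the global-triviality caveat in infinite dimensions; everything else is an unwinding of definitions.
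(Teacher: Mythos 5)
Your argument is correct and follows essentially the same route as the paper: you choose the same splitting $\linf=\cE\oplus\cF\oplus\im\cD$ (your $\cH,\cR,\cL$), take the linear functions dual to $\im\cD$ as the $w^a$, and cut out the reduction by $w^a=0$, $Qw^a=0$, which is exactly the explicit coordinate version the paper itself spells out right after its proof. Your additional checks (tangency of $Q$, contractibility of the fibres, identification of the linear part of the inclusion with $\mathbf{e}$) are consistent with, and slightly more detailed than, the paper's quotient-of-functions formulation.
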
 
\begin{proof}
Let us fix a decomposition $\linf=\mathcal{E}\oplus \cF\oplus Im(\mathcal{D})$, where $\mathcal{E}\oplus Im(\mathcal{D})=Ker(\mathcal{D})$ and $\mathcal{E}\cong H^{\bullet}_{\mathcal{D}}(\linf)$. Consider a vector space of linear functions on $\manM_\linf$. These are $1:1$ with linear functionals on $\linf$ and hence we can consider a subspace $W$ of functions that corresponds to linear functionals vanishing on $\cE\oplus \cF$. Then we consider an ideal generated by $W$ and $QW$ and take a quotient of all functions on $\manM_\linf$ by the ideal. This quotient can be identified with functions on $\cE$ (because  for nonvanishing $w$ the linear piece of  $Qw$ is also nonvanishing, i.e. $Qw^a=v^a+o(\psi^2)$). This way we arrive at $Q$-manifold whose underlying linear space is $\cE\cong H^\bullet(\cD)$, the embedding is a $Q$-morphism and hence defines an $L_\infty$-morphism of the underlying $L_\infty$ algebras.
\end{proof}
The above proof can be made even more explicit by picking a basis in $\linf$ adapted to the decomposition $\linf=\mathcal{E}\oplus \cF\oplus Im(\mathcal{D})$ and then taking a dual basis $w^a$ in $(\im{\cD})^*$ seen as linear functions on $\manM_\linf$. The minimal model submanifold is then the zero locus of $w^a$ and $Qw^a$. In these terms it is easy to obtain an explicit formulas for the structure maps of the reduced algebra.

The above statement has a straightforward generalisation. Let $\cG\subset \im{\cD}$ be a subspace of $\im{\cD}$. Taking as $w^\alpha$ the basis elements of $\cG^*$ seen as linear functions of $\manM_\linf$ (this implies picking a decomposition $\linf=\cG\oplus \text{"the rest"}$) immediately defines the equivalent  reduction to a $Q$-submanifold singled out by $w^\alpha=0$ and $Qw^\alpha=0$. The underlying linear space of the reduces system can be taken to be $\cE$ which is determined by a decomposition 
$\linf=\cF\oplus\cG\oplus \cE$ with $\cD\cF=\cG$.

We now specialize a notion of equivalent reduction to the case where the underlying $Q$-manifolds are symplectic and $L_Q \omega=0$. Let $i:(N,\tilde Q)  \hookrightarrow (M,Q)$ be an equivalent reduction and $\tilde \omega= i^*\omega$ be nondegenerate.  
Pulling back \eqref{c45} to $N$ gives
\begin{equation}\label{formula75}
	i_{\tilde Q}\tilde{\omega}+d\tilde{S}_{BV}=0,
\end{equation}
where $\tilde{S}_{BV}= i^* S_{BV}$. Because we have assumed $\tilde \omega$
is invertible we arrive at a symplectic $Q$-manifold and hence at a BV system. It gives a toy model example of the  equivalent Lagrangian BV systems with the BV master action $\tilde S_{BV}$. In the case of local field theory the appropriate generalisation of such an equivalence was proposed in~\cite{Dresse:1990dj} and is known under the name of elimination of generalised auxiliary fields. 

Following~\cite{Dresse:1990dj} (see also~\cite{Barnich:2004cr}) let us spell out explicitly how this reduction can be phrased in terms of the action.
We restrict to local analysis and work in Darboux coordinates $\Phi,\Phi^*$. Suppose that by coordinate transformations we can introduce new Darboux coordinates $\phi^i$, $y^a$, $\phi^*_i$, $y^*_a$  such that
\begin{equation}\label{fo30}
	\left.\frac{\delta S_{BV}}{\delta y^a}\right|_{y^*_a=0}=0\Leftrightarrow y^a=Y^a[\phi,\phi^*],
\end{equation}
then $y^a$, $y^*_a$ are generalized auxiliary fields. Now consider a submanifold $N$ defined by
\begin{equation}\label{fo31}
	N:=\left\{\frac{\delta S_{BV}}{\delta y^a}=0,y^*_a=0\right\}\,.
\end{equation}
Note that the conditions can be written as $Qy^*_a=0$ and $y^*_a=0$ and hence we are dealing with an equivalent reduction and, moreover, by construction the induced 2-form $\tilde\omega$ is symplectic.  This gives us a new BV system, where $N$ is the field-antifield space. As coordinates on $N$ one can take $\phi^i$, $\phi^*_i$ and the BV action is
\begin{equation}
	S^{red}[\phi,\phi^*]=\left.S[\Phi,\Phi^*]\right|_N.
\end{equation}
By construction $S^{red}$ satisfies the master equation
\begin{equation}
	(S^{red},S^{red})^{red}=0.
\end{equation}
with respect to antibracket determined by $\tilde\omega$. Note that the antinbracket on $N$ can be also seen as a Dirac bracket if one considers $y^*_a$ and $Qy^*_a$ as the second-class constraints.

Finally, if $(M_\linf,Q)$ is a formal pointed $Q$-manifold associated to $(\linf,l_i)$ the above equivalent reduction become well-defined globally and give an alternative way to derive equivalent reductions, and minimal models in particular, for cyclic $L_\infty$-algebras. 

If $\linf$ splits as $\linf=\cG \oplus \cE \oplus \cF$,
where $\cE\subset \ker{\cD}$, $\cG\subset \im{\cD}$ and $\cD$
defines an isomorphic between $\cF$ and $\cG$ the restriction of cyclic structure to $\cE$ is nondegenerate or, in the $Q$-manifold language, the associated equivalent reduction results in symplectic $Q$-manifold. To see this
let us first observe that $\cG$ is orthogonal to $\cE$ and to itself. Moreover, by adjusting a basis in $\cE$ by vectors from $\cG$ one can always achieve that $\cF$ is orthogonal to $\cE$ and hence the restriction of cyclic structure to $\cE$ must be nondegenerate. The above argument works for $\linf$ finite dimensional or if extra conditions are imposed, e.g. if $\linf$ is Hilbert space.~\footnote{A simple counterexample is to take differential forms on $\fR^n$ with compact support and the inner product $\int \alpha \wedge \beta$.} To conclude, under some assumptions an equivalent reduction of symplectic $Q$ manifold gives again a symplectic $Q$-manifold. However, in the case of local field theory this is usually not the case as the cyclic structure is degenerate in general. 

\begin{example}
\normalfont Let us illustrate the reduction to a minimal model in terms of $Q$-manifolds by the example of the scalar field theory with cubic interaction. Since there are no gauge symmetries, the BV-action coincides with the classical action and is given by:
\begin{equation}\label{formula76}
	S_{BV}[\Phi,\Phi^*]:=\int_{\mathbb{R}^{1,3}}d^4x\left(\frac{1}{2}\Phi \mathrm{K}[\Phi]+\frac{\lambda}{3!}\Phi^3\right)\,,
\end{equation}
where $\mathrm{K}$ is the Klein-Gordon operator (generally speaking regularized) and $gh(\Phi)=0$, $gh(\Phi^*)=-1$. And we have a symplectic form
\begin{equation}\label{formula77}
	\omega=\int d^4x\delta{\Phi(x)}\delta\Phi^*(x).
\end{equation}
We choose the field configuration space as follows
\begin{equation}
    	\Phi=\phi+w,
\end{equation}
where $\phi\in \ker \mathrm{K}$ is a zero mode of $\mathrm{K}$, i.e. a solution of the Klein-Gordon equation and $w\in S(\mathbb{R}^{1,3})$ is a function whose derivatives rapidly decrease at infinity. The dual space of the antifields is decomposed accordingly:
\begin{equation}
    \Phi^*=\phi^*+w^*\,.
\end{equation}

Note that the chosen functional space is not closed under multiplication and we have to redefine the multiplication in such a way that the cyclicity condition is not violated. We do not dwell on the details of this procedure and simply assume that the multiplication is well-defined. A detailed discussion of subtleties can be found in ~\cite{Macrelli:2019afx,Arvanitakis:2019ald}. 

BV-action in terms of $\phi,w$ takes the form
\begin{equation}\label{formula80}
	S_{BV}[\phi,w]=\int_{\mathbb{R}^{1,3}}d^4x\left(\frac{1}{2}w \mathrm{K}[w]+\frac{\lambda}{3!}\phi^3+\frac{\lambda}{3!}w^3+\frac{\lambda}{2}w^2\phi+\frac{\lambda}{2}\phi^2w\right),
\end{equation}
and the equations of motion read as
\begin{equation}
	\mathrm{K}[w]+\frac{\lambda}{2}\phi^2+\lambda\phi w+\lambda w^2=0.
\end{equation}
Since $\mathrm{K}:S(\mathbb{R}^{1,3})\rightarrow S(\mathbb{R}^{1,3})$ is an invertible operator on the Schwartz space these equations can be solved perturbatively with respect to $w$. Decomposing $w$ in a series according to perturbation theory as $w=\sum^{\infty}_{k=1}\lambda^k w^{(k)}$ and considering second order in $\lambda$ we get
\begin{equation}\label{SCF1}
         	w^{(1)}=-\frac{1}{2!}\mathrm{K}^{-1}[\phi^2],\qquad w^{(2)}=\frac{1}{2!}\mathrm{K}^{-1}\left[\phi \mathrm{K}^{-1}[\phi^2]\right],
\end{equation}
and $w=\lambda w^{(1)}+\lambda^2 w^{(2)}+ o(\lambda^3)$. We see that $w$ and $w^*$ are the generalized auxiliary fields. Now we can define a submanifold $N:=\{w-\mathcal{V}(\phi)=0,w^*=0\}$, where $\mathcal{V}(\phi)$ is given by \eqref{SCF1}, and \textit{eliminate} $w$, $w^*$ (or what is the same a restriction to the submanifold $N$). Then BV-action takes the form
\begin{equation}\label{formula85}
	\tilde{S}_{BV}[\phi,\phi^*]=\frac{\lambda}{3!}\int_{\mathbb{R}^{1,3}}d^4x\phi^3-\frac{\lambda^2}{8}\int_{\mathbb{R}^{1,3}}d^4x\left(\phi^2 \mathrm{K}^{-1}[\phi^2]\right)+o(\lambda^3).
\end{equation}
By construction, the above reduced BV action is the initial action evaluated on the perturbative solutions of the free equations of motion and hence is the generating functional for the tree-level S-matrix in agreement with~\cite{Arefeva:1974jv,Jevicki:1987ax}.
\end{example}

\section{$L_\infty$-structures in local gauge theories}
\label{sec:loc}
\subsection{Jet-bundle BV description}

A systematic framework to describe local gauge theories such that the locality is realized in a manifest way is provided by the jet-bundle BV approach.  

Let us briefly recall the basics of jet bundles. Given a smooth locally-trivial fiber bundle $\cF\to X$, it defines its associated jet bundle $J^{k}(\cF)\rightarrow X$ for any integer $k\geq 0$,  which is also a bundle over $J^{k-1}(\cF)\rightarrow X$ (note that $J^0(\cF)=\cF$). It is useful to work with an infinite jet bundle limit $J^{\infty}(\cF)$ which is the projective limit of $J^{k}(\cF)$ as $k\to\infty$. As functions on $J^{\infty}(\cF)$ one takes local functions, i.e. those that arise from finite jet bundle (i.e. depend on finite number of coordinates). By some abuse of notations we denote local functions by $\fu(J^\infty(\cF))$). Any section $\alpha:X\to \cF$ has a canonical jet prolongation $\bar\alpha: X \to J^{\infty}(\cF)$. There is the canonical Cartan distribution $\cC \subset T(J^{\infty}(\cF))$ generated by tangent spaces to prolonged sections. In local coordinates, it is generated  by vector fields, called total derivatives,
\begin{equation}\label{j1}
    D_{\mu}=\frac{\partial}{\partial x^{\mu}}+\psi^A_{\mu}\frac{\partial}{\partial \psi^A}+\psi^A_{\mu\nu}\frac{\partial}{\partial \psi^A_{\nu}}+\dots,
\end{equation}
where $x^\mu,\psi^A_{\mu\ldots}$ are standard local coordinates on $J^{\infty}(\cF)$ induced by local coordinates $x^\mu$ on $X$ and local coordinates $\psi^A$ on the fibers of $\cF$.  If $\bar\alpha$ is a prolongation of $\alpha:X\to \cF$ then for any local function $f$ one has $(\bar\alpha^*(D_\mu f))=\dl{x^\mu}(\bar\alpha^*(f))$. Evolutionary vector fields are those preserving Cartan distribution. For instance, a vertical evolutionary $V$ satisfies $[D_{\mu},V]=0$.  The algebra of local differential forms on $J^{\infty}(\cF)$ is bigraded by horizontal and vertical form degree. Accordingly, de Rham differential decomposes as $d=\dh+\dv$. In local coordinates $\dh=dx^\mu D_\mu$. Further details and original references can be found in e.g.~\cite{Anderson1991,Krasil'shchik:2010ij}.

In the jet-bundle BV approach $\cF$ is a $\fZ$-graded fiber bundle over space-time manifold $X$, $\dim{X}=n$, whose sections are fields, ghosts (for ghosts), and antifields. The information about the equations of motion and (gauge for) gauge symmetries is encoded in the vertical evolutionary vector field $s$, called BRST differential, defined on $J^{\infty}(\cF)$ and satisfying
\begin{equation}
    \frac{1}{2}[s,s]=s^2=0\,, \qquad \gh{s}=1\,, \qquad [D_{\mu},s]=0\,.
\end{equation}
In what follows we refer to the above data as to a local BV system at the level of equations of motion. In the case of Lagrangian BV systems, in addition to $s$,
bundle $J^{\infty}(\cF)$ is equipped with a closed $(n,2)$-form $\omega$ of ghost degree $-1$ which is nondegenerate on vertical evolutionary vector fields. The systematic formulation  of Lagrangian BV in terms of jet bundles was put forward in~\cite{Barnich:1995ap,Barnich:2000zw}. The equation  of motion version was proposed in~\cite{Barnich:2004cr,Kaparulin:2011xy}

The data of a local BV system encode the equations of motion and the gauge (for gauge) symmetries of the underlying gauge system. More specifically, a section $\sigma:X\rightarrow \cF$ is called a solution if its jet prolongation $\bar{\sigma}:X\rightarrow J^{\infty}(\cF)$ satisfies
\begin{equation}\label{js}
    \bar{\sigma}^*\circ s=0,
\end{equation}
where $\bar{\sigma}^*$ denotes the pullback map determined by $\bar{\sigma}$. In other words solutions belong to the zero locus of $s$. Note that $\bar{\sigma}^*$ preserves the ghost degree and hence for a function $f$ of nonvanishing ghost degree one gets $(\bar{\sigma}^*)(f)=0$, simply because on $X$ there are no functions of nonvanishing degree. 

To describe gauge transformations of a given section $\sigma$ let $\epsilon$ be a vertical vector field on $\cF$ with  $\gh{\epsilon}=-1$ and $\bar\epsilon$ be its prolongation to $J^\infty(\cF)$.
It follows, ${\epsilon_\sigma}^*: C^{\infty}(J^{\infty}(\cF)) \to C^{\infty}(X)$ defined by ${\epsilon_\sigma}^*=\sigma^*\circ \bar\epsilon$
satisfies $\bar{\epsilon}^*(fg):= {\epsilon}^*(f){\sigma}^*(g)+(-1)^{|f|}{\sigma}^*(f)\bar{\epsilon}^*(g)$ for all $f,g\in C^{\infty}(J^{\infty}(\cF))$ and hence defines an infinitesimal gauge transformations of $\bar\sigma^*$:
\begin{equation}
\delta{\bar\sigma}^*=\bar{\epsilon}_\sigma^{*}\circ s\,.
\end{equation}
It is easy to check that $\bar\sigma^*+\bar{\epsilon}_\sigma^{*}\circ s$ is again a prolongation of a solution, indeed $\bar{\epsilon}_\sigma^{*}(s D_\mu f))=\bar\sigma^{*}(\bar\epsilon(s D_\mu f))=\dl{x^\mu} \bar{\epsilon}_\sigma^{*}(sf))$ because $\bar\epsilon$ is evolutionary and $\bar\sigma$ is a prolongation of $\sigma$. In a similar way one defines gauge for gauge symmetries. In the case of a Lagrangian BV system the above definition reproduces the equations of motion and gauge transformation encoded in the BV master action. The component form of the above construction of equations of motion and gauge transformations in terms of $s$ can be found in~\cite{Barnich:2004cr}, see also \cite{Lyakhovich:2004xd}.

\subsection{$L_\infty$ description of local BV systems}

We now turn to $L_\infty$-structure underlying a local BV system. Consider a vertical bundle $V\cF:=ker(\pi_*)\subset T\cF$, where $\pi_*$ is the differential of $\pi:\cF\to X$.
Given a fixed solution $\sigma_0:X\rightarrow\cF$ it determines a pullback bundle $\sigma^*_0 (V\cF)$ over $X$. The fiber of $\sigma^*_0 (V\cF)$ over $x\in X$ is a vertical subspace of $T_{\sigma_0(x)}E$. In this way we get a graded \textit{vector} bundle $\sigma^*_0 (V\cF)$ over $X$, which we denote as $\cN$ for brevity. The space of sections $\Gamma(\cN)$ is a module over $\cC^{\infty}(X)$.  Locally, it is given by functions on $X$ with values in a graded vector space, the typical fiber of $\cN$.

Let us restrict to local analysis and introduce new local coordinates on $\cF$ such that the chosen solution ${\sigma}_0$ is given by $\psi^{\prime A}=0$. In the associated jet-bundle coordinates $\psi^{\prime A}_{\mu\ldots}$ its prolongation $\bar\sigma_0$ is defined by $\psi^{\prime A}_{\mu\ldots}=0$. The expansion of the BRST-differential $s$ into the sum of terms homogeneous in $\psi^{\prime A}_{(\mu)}$ reads as:
\begin{equation}\label{j4}
    s=s_1+s_2+s_3+\dots,
\end{equation}
where $s_k$ is the term of total homogeneity degree $k-1$. For example the linear term $s_1$ is determined by its action on $\psi^A$ (here and below we omit the prime):
\begin{equation}\label{j5}
s_1(\psi^A)=\Omega^A_B(x)\psi^B+\Omega^{A|\mu}_B(x)\psi^B_{\mu}+\Omega^{A|\mu\nu}_B(x)\psi^B_{\mu\nu}+\ldots.
\end{equation}
This, in turn, is determined by a finite order linear differential operator: $\Omega:\Gamma(\cN) \to \Gamma(\cN)$ so that 
$s_1\Psi=\Omega \Psi$, where $\Psi=\psi^A e_A$ is a distinguished linear function on $J^\infty(\cF)$ with values in $\cN$ and where we assumed that $\Omega$ acts on fields by total derivatives. Let us give a component expression of the action of $\Omega$ on a given section $v=e_A v^A(x)$:
\begin{equation}
\Omega(v)=\Omega(e_A v^A)=e_B \Omega^B_A(x,\partial_x)v^A(x),
\end{equation}
where $e_A$ denotes a local frame of $\cN$ induced by a coordinate system $\psi^A$. Note that nilpotency of $s_1$ implies $\Omega^2=0$ so that $\Omega$ can be understood as a BRST-operator for some first-quantized system, see e.g.~\cite{Barnich:2004cr}.

In a similar way, $s_2$ is determined by 
\begin{equation}\label{j6}
     s_2(\psi^A)=  \frac{1}{2}\left(I^A_{BC}(x)\psi^B\psi^C+2I^{A|\mu}_{BC}(x)\psi^B_{\mu}\psi^C+I^{A|\mu\nu}_{BC}(x)\psi^B_{\mu}\psi^C_{\nu}+\dots\right),
\end{equation}
where dots denote terms quadratic in $\psi^A_{\dots}$ and  terms involving higher derivatives.  Again, $s_2$ is one to one with a bilinear graded-symmetric differential operator $I_2:\Gamma(\cN)\tensor \Gamma(\cN) \rightarrow\Gamma(\cN)$. More precisely, if $\Psi^M$ is a collective notation for fiber coordinates $\psi^A_{\mu\ldots}$ we can write $s_2 \psi^A=(I_2)^A_{MN}\Psi^M\Psi^N$ with $(I_2)^A_{MN}=(-1)^{\p{M}\p{N}}(I_2)^A_{NM}$, defining $I_2$ via   $I_2(v,w)=e_A(I_2)^A_{MN} \bar v^*(\Phi^M) \bar w^*(\Phi^N)$, where $\bar v,\bar w$ are prolongations of sections $v,w:X\to \cN$ and $\Phi^M$ are fiber coordinates on $J^\infty(\cN)$ defined in the same way as $\Psi^M$. It is instructive to decompose the multiindex $N$ as $(A,(\mu))$ and to write $I_2$ explicitly as:
\begin{equation}
    I_2(v,w)=I_2(e_B v^B,e_C w^C)=e_A(I_2)^A_{(B(\mu)),(C(\nu))}(x) \d_{(\mu)}v^B(x)\d_{(\nu)}w^C(x)\,.
\end{equation}

Proceeding as above, we arrive at the collection of multilinear graded-symmetric differential operators:
\begin{equation}    \Omega:\Gamma(\cN)\rightarrow\Gamma(\cN)\,, \qquad  I_{n\geq 2}:\Gamma^{\otimes n}(\cN)\rightarrow\Gamma(\cN)\,.
\end{equation}
Strictly speaking they are defined only locally, i.e. for sections on the coordinate patches of $X$. This gives an $L_{\infty}$-structure on $\Gamma(\cN)$ over a coordinate patch. On the overlap of patches  the respective $L_\infty$ structures are related by an $L_\infty$-isomorphism.

In terms of $\cN$ the field configurations can be identified with degree zero sections of $\cN$ (if fermionic fields are present this identification requires some care) and the condition that $\varphi\in\Gamma^{(0)}(\cN)$ is a solution takes the standard form:
\begin{equation}
\Omega\varphi+\sum_{i\geq 2}\frac{1}{i!}I_i(\varphi,\ldots,\varphi)=0\,.
\end{equation}
Similarly, a gauge transformation of $\varphi \in \Gamma^{(-1)}(\cN)$ can be also written in the form~\eqref{fo6} as
\begin{equation}\label{j8}
    \delta_{\epsilon}\varphi=\Omega\, \epsilon+\sum_{i\geq 2}\frac{1}{(i-1)!}I_i(\epsilon,\varphi,\dots,\varphi),
\end{equation}
where $\epsilon\in \Gamma^{(-1)}(\cN)$ is a gauge parameter. and $\varphi\in \Gamma^{(0)}(\cN)$ is a degree zero section. In a similar way we can define gauge for gauge transformations. Parameters of such transformations are elements of $\Gamma^{(-k)}(\cN)$, with $k\geq 2$.

In contrast to equations of motion and gauge symmetries the extension of the action functional~\eqref{formula9} to the local field theory setup is straightforward only at the formal level. The problem is that the cyclic structure (BV symplectic form) involves integration over $X$ and is well defined on sections with compact support only,  while interesting field configurations often do not belong to this class.  In the local field theory setup the standard way out is to work in terms of Lagrangians rather than action functionals. As we are going to see and discuss in more details in Section~\bref{sec:presymp}, the Lagrangians correspond to fiberwise (pre)symplectic structures. 

To summarise, the Taylor expansion of BRST-differential $s$ around some fixed solution gives us the $L_{\infty}$-structure on the space of sections $\Gamma(\cN)$ of the graded vector bundle $\cN$ induced by the chosen solution. All the structure maps are given by multilinear graded symmetric differential operators of ghost degree 1 and, in general, are defined only locally.

\subsection{Gauge PDEs}

A more flexible and invariant way to describe local gauge theories is based on the approach developed in~\cite{Barnich:2004cr,Barnich:2010sw,Grigoriev:2019ojp}. In this approach a general local gauge theory is represented in the  special 1st order form, generalizing the AKSZ formulation of topological models and the unfolded formalism of higher spin theories. More precisely, a gauge theory is encoded in the geometrical object known as gauge PDE \cite{Grigoriev:2019ojp}.

More specifically, a gauge PDE is a locally trivial fiber bundle $\pi:E \to T[1]X$ of $\fZ$-graded supermanifolds such that $\pi^*\circ \dx=Q\circ \pi^*$, where $\dx$ is the de Rham differential of $X$, seen as a vector field on $T[1]X$. In addition it is required to be locally equivalent to a non-negatively graded gauge PDE and to satisfy some regularity conditions. 

Let us spell out explicitly the structure of  $Q$ in general coordinates. Let $(x^{\mu}, \theta^{\mu}, \psi^A)$ be coordinates adapted to a local trivialization ($x^\mu$ originate from coordinates on $X$ and $\psi^A$ on the typical fiber). Using $\pi^*\circ \dx=Q\circ \pi^*$
one finds that $Q$ has the following form 
\begin{equation}\label{l5}
	Q=\theta^{\mu}\frac{\partial}{\partial x^{\mu}}+q^A(x,\theta,\psi)\frac{\partial}{\partial \psi^A}.
\end{equation}
In the case if the trivialisation can be chosen such that $q^A=q^A(\psi)$ the $Q$-bundle is locally-trivial and we are dealing with the globalization of an AKSZ sigma model (with possibly infinite-dimensional target space). See~\cite{Grigoriev:2019ojp,Barnich:2010sw} for further details. 

It is easy to see that gauge PDEs  encode gauge systems.  Indeed, field configurations are identified with sections $\sigma: T[1]X \to E$. Solutions are $Q$-sections, i.e. sections satisfying 
\begin{equation}\label{l2}
\dx\circ\sigma^*=\sigma^*\circ Q,
\end{equation}
The gauge transformation of a given section $\sigma$ is given by 
\begin{equation}\label{l3}
\delta\sigma^*=\dx\circ\epsilon^*+\epsilon^*\circ Q\,,
\end{equation}
where $\epsilon_\sigma^*:C^{\infty}(E)\rightarrow \cC^{\infty}(T[1]X)$ has degree $-1$, satisfies  $\epsilon_\sigma^*(fg):= \epsilon_\sigma^*(f)\sigma^*(g)+(-1)^{|f|}\sigma^*(f)\epsilon_\sigma^*(g)$ and 
$\epsilon_\sigma^*(\pi^* \alpha)=0$ for all $\alpha \in \fu(T[1]X)$. The latter condition exclude gauge transformations involving $X$ reparameterisations because such transformations generally transform sections to generic maps $T[1]X\to E$. Nevertheless they preserve equations \eqref{l2} and can be regarded as gauge symmetries of the parameterised version of the theory, see~\cite{Barnich:2010sw}. Note that it is sufficient to consider gauge parameters of the following form $\epsilon_\sigma^*=\sigma^*\circ Y$, where $Y$, $\gh{Y}=-1$ is a vertical vector field on $E$, in which case all the conditions are satisfied automatically.

\subsection{$L_\infty$ description of gauge PDEs}
\label{sec:gPDE-Li}

We now describe the general construction of $L_\infty$-structure associated to generic gauge PDEs.  Suppose we are given with a generic gauge PDE $(E,T[1]X,Q)$ and let $\sigma_0:T[1]X\rightarrow E$ be a fixed solution. Just like in the analogous discussion of local BV systems $\sigma_0$ defines a vector bundle $\cN=\sigma_0^*(VE)$ but now it is a graded vector bundle over $T[1]X$. Recall that $VE:=ker(\pi_*)\subset TE$, where $\pi:E\to T[1]X$ is a canonical projection.  The typical fiber of $\cN$ is just the typical tangent space to a fiber of $E$. Note that $\Gamma(\cN)$ is naturally a module over $\fu(T[1]X)$ which, in turn, can be seen as an exterior algebra of $X$.

Just like in the analysis of local BV systems it is convenient to introduce  new  coordinates  $(x^\mu,\theta^\mu,\psi^A)$ such that $\sigma_0^*(\psi^A)=0$. Expanding $Q$ in $\psi^A$ gives:
 \begin{equation}\label{l6}
 	Q=\theta^{\mu}\frac{\partial}{\partial x^{\mu}}+\sum_{i\geq 1}\frac{1}{i!}\psi^{B_1}\cdots\psi^{B_i}q^A_{B_1\cdots B_i}(x,\theta)\frac{\partial}{\partial \psi^A}\,,
 \end{equation}
 where coefficients $q^A_{B_1\cdots B_i}(x,\theta)$ are graded-symmetric in lower indexes.  A more detailed expression for the linear term in $Q$ reads as
 \begin{equation}\label{l17}
 	Q_1
=\psi^B\Omega^A_B(x)\dl{\psi^A}+\theta^{\mu}\left(\dl{x^{\mu}}+(-1)^{|\psi^B|}\psi^Bq^A_{\mu B}(x)\dl{\psi^A}\right)+\cdots,
 	\end{equation}
 where dots denote terms of higher orders in $\theta$. It is clear that the second term
 determines a linear connection. The nilpotency of $Q_1$ then implies
\begin{equation}
     \Omega^A_B(x)\Omega^B_C(x)=0.
 \end{equation}
 It also follows from $Q^2_1=0$  that the linear connection determined by $Q_1$ is flat modulo $\Omega$-exact terms. In particular, it induces a flat connection in the cohomology of $\Omega$ in $\Gamma(\cN)$.

Let us now focus on the space of sections $\Gamma(\cN)$ of $\cN$, which is a \textit{graded module} over $C^{\infty}(T[1]X)$ (recall that $\cN$ is a vector bundle over $T[1]X$). If we restrict ourselves to local analysis, as we do in what follows, $\Gamma(\cN)$ is isomorphic to differential forms on $X$ with values in the graded vector space $\cF$, which is a typical fiber of $\cN$. As a degree in $\Gamma(\cN)$ we take the sum of the form degree and the ghost degree in $\cF$, which, in turn, is induced from the degree in the fibers of $E$.

The homological vector field $Q$ induces a linear map $l_1: \Gamma(\cN)\rightarrow \Gamma(\cN)$. To define it in coordinate-free terms let us identify $\Gamma(\cN)$ as the quotient of the vertical vector fields on $E$, which we denote by $\mathfrak{X}_V(E)$, modulo those that vanish on $\sigma_0(T[1]X)$ seen as a submanifold of $E$.  This quotient is clearly a module over $\cC^{\infty}(T[1]X)$.  Then $l_1$ can be defined as follows:
\footnote{In the case where $\dim{X}=0$ this representation for $l_1$ was in \cite{Grigoriev:2000zg}.}
\begin{equation}
\label{l1inv}
l_1(\alpha)=[(\commut{Q}{v})]\,, \quad \alpha=[v]\,,
\end{equation}
where $[v]$ denotes an equivalence class of $v \in \mathfrak{X}_V(E)$. It is easy to see that $l_1$ is well-defined because $\commut{Q}{v}$ vanishes on $\sigma_0$ if $v$ itself  vanishing on $\sigma_0$. Indeed, $\sigma_0^*(Qvf)-(-1)^{\p{v}}\sigma_0^*(vQf)=0$ for all $f \in \cC^{\infty}(E)$ because $\sigma^*\circ v=0$ and $\sigma_0^* \circ Q= \dx \circ \sigma_0^*$. Introducing the local frame $e_A$ of $\cN$ the explicit expression for $l_1$ reads as:
\begin{equation}
l_1 \alpha = (\dx \alpha^A) e_A-(-1)^{|\alpha|}
\alpha^A(x,\theta)q^{C}_{A} e_C\,, \qquad  \alpha=  \alpha^A(x,\theta)e_A\,,
\end{equation}
and can be interpreted as a $\fZ$-graded version of  Quillen supeconnection in $\cN$, seen as a bundle over $X$. Superconnections were introduced in \cite{Quillen:1985vya} and turned out to be useful in the QFT context, see e.g.~\cite{Neeman:1990tfh,Roepstorff:1998vh}.   The linear in $\theta^\mu$ piece of $l_1$ reads as
\begin{equation}
    \nabla  \alpha =(\dx \alpha^A) e_A-(-1)^{|e_A|}\theta^{\mu}q^{C}_{\mu A}\alpha^A e_C
\end{equation}
and can be interpreted as a usual linear connection in $\cN$, seen as a vector bundle over $X$. 

In contrast to $l_1$ the higher $L_\infty$ maps on $\Gamma(\cN)$ depend on the choice of trivialization and are generally defined only locally. Because $\dx$ term in $Q=\dx+q$  contributes to $l_1$ only, the higher maps are determined by $q$ and are multilinear over $\fu(T[1]X)$.  To define them it is useful to employ the language of \textit{higher derived brackets} \cite{Voronov:2005,Voronov:2004tom,Batalin:1999gf}, see also \cite{KosmannSchwarzbach1996FromPA,Kosmann-Schwarzbach:2003skv}.

Suppose we are given a graded Lie algebra $V$ and its abelian subalgebra $U\subset V$. Let $P$ be a projector onto $U$ satisfying
\begin{equation}\label{hb1}
    P[v,w]=P[Pv,w]+P[v,Pw],
\end{equation}
for all $v,w\in V$. The identity \eqref{hb1} means that the kernel of $P$ is also a subalgebra in $V$. Let $\Delta \in V$  be an element satisfying $\frac{1}{2}[\Delta,\Delta]\equiv \Delta^2=0$ and $\Delta\in \ker{P}$. Then $U$ is naturally an $L_{\infty}$-algebra, whose $n$-ary map is given by the \textit{$n$-th derived bracket formula} \footnote{The formula \eqref{hb2} differs from the one in \cite{Voronov:2005} by prefactor $(-1)^{n+1}$ because we want to be in agreement with the notations of $L_{\infty}$-algebra, introduced above.}:
\begin{equation}\label{hb2}    l_n(v_1,\dots,v_n)=(-1)^{n+1}P[\dots[[\Delta,v_1],v_2],\dots,v_n],
\end{equation}
where $v_i\in U$ and $n=0,1,2,\dots$. Notice that \eqref{hb2} are always graded-symmetric. The generalized Jacobi identities are given by
\begin{equation}\label{hb3}    J_n(v_1,\dots,v_n)=P[\dots[[\Delta^2,v_1],v_2],\dots,v_n],
\end{equation}
where $J_n(v_1,\dots,v_n)$ denotes the $n$-th generalized Jacobi identity for $l_n$ so that for $\Delta^2=0$ these identities are satisfied. 

Let us specialize the above construction to the case at hand. As $V$ we take $\mathfrak{X}_V(E)$, the algebra of vertical vector fields on $E$.  To define $U$ we need to fix a torsion-free flat affine connection in $TE$ which preserves $V$ in the sense that $\bar\nabla_X Y\in V$ for any vector field $X$ and any vertical vector field $Y$. Such a connection always exists locally, for instance, given a local coordinate system $x^\mu,\theta^\mu,\psi^A$ one can simply take $\bar\nabla$ such that  
\begin{equation}
     \bar\nabla_{\frac{\small{\partial}}{\small{\partial\psi^A}}}\dl{\psi^B}=0  \,,\quad   \bar\nabla_{\frac{\small{\partial}}  {\small{\partial x^{\mu}}}} \dl{\psi^A}=0\,, \quad  \bar\nabla_{\frac{\small{\partial}}{\small{\partial\theta^{\mu}}}}\dl{\psi^A}=0\,.
\end{equation}
Then $U$ is defined to be a subalgebra of covariantly constant vertical vector fields, i.e. vector fields satisfying $\bar\nabla v=0,vx^\mu=0, v\theta^\mu=0$. It is easy to see that $U$ is isomorphic to $\Gamma(\cN)$ as any vertical vector field defined on $\sigma_0(X)$ has a unique covariantly constant extension to a neighbourhood of $\sigma_0(X)$. The projection $P$ then takes a vector field $v$ to the unique covariantly constant representative of the same equivalence class $[v]$.

It is easy to check that $l_1$ defined in this way indeed coincides with~\eqref{l1inv}. Let us spell out explicitly the expression for $l_2$:
\begin{equation}\label{HDB2}
l_2(v_\alpha,v_\beta)=P\commut{\commut{Q}{v_\alpha}}{v_\beta}\,,
\end{equation}
where $v_\alpha,v_\beta$ are the vector fields from $U$ representing sections $\alpha,\beta \in \Gamma(\cN)$.
The component expression for $l_2$ reads as
\begin{equation}
    l_2( \alpha,\beta)=(-1)^{|\alpha|+|\beta|+|e_A||\beta|+1} \alpha^A\beta^Bq^C_{AB}(x,\theta)e_C  \,,\quad \alpha=\alpha^A(x,\theta)e_A\quad \beta=\beta^A(x,\theta)e_A \, ,
\end{equation}

To summarize, a gauge PDE determines an $L_{\infty}$-algebra structure on the space $\Gamma(\cN)$. However, the $l_i$ with $i>1$ depend on the choice of trivialization and are defined only locally. On the overlap of coordinate patches they should be related by $L_{\infty}$-isomorphisms. Note that all the maps save for $l_1$ are multilinear over $\fu(T[1]X)$ while $l_1$ is a sum of $\dx$ and a linear piece. In other words, an $L_\infty$-algebra underlying a gauge PDE is almost algebraic in the sense that higher maps do not involve space-time derivatives.

A clear disadvantage of the above construction is that $L_\infty$ structure generally depends on the choice of trivialization and hence is in general defined only locally. It should be possible to develop a globally well-defined version by allowing $\nabla$ to be curved. This should extend the discussion of Appendix~\bref{app:Fedosov} to the case of $Q$-bundles. In this context it is also worth mentioning the related developments of \cite{Mehta_2015,Seol:2021tol, Arvanitakis:2022npu,Chatzistavrakidis:2023otk}.

In applications, it often happens that all the $q^A_{B\ldots}(x,\theta)$ are $\theta$-independent save for 
$q^{A}_{B}(x,\theta)=\Omega^A_B(x)+\theta^\mu q^A_{\mu B}(x)$. In this case $l_1$ takes the following form:
\begin{equation}
\label{nabla+omega}
    l_1=\nabla+\Omega\,,
\end{equation}
with $\nabla$ the flat connection in $\cN$ defined above and $\Omega$ a linear operator on $\Gamma(\cN)$ defines as $\Omega e_A=-(-1)^{\p{e_A}}\Omega_A^B e_B$. In this case the explicit form of the equations of motion and gauge transformations reads as
\begin{equation}
(\nabla+\Omega)\varphi+\sum_{i\geq2}\frac{1}{i!}l_i(\varphi,\ldots,\varphi)=0,\qquad \delta_{\epsilon}\varphi=(\nabla+\Omega)\epsilon+\sum_{i\geq2}\frac{1}{(i-1)!}l_i(\epsilon,\ldots,\varphi)\,,
\end{equation}
where $\varphi\in \Gamma^{(0)}(\cN)$, $\epsilon\in \Gamma^{(-1)}(\cN)$. In a similar way we
can define gauge for gauge transformations with parameters that are elements of $\Gamma^{(-k)}(\cN), k\geq 2$. Note that a rather concise description of gauge fields of arbitrary spin in constant curvature spaces as well as generic conformal gauge fields, see e.g.~\cite{Barnich:2004cr,Barnich:2006pc,Alkalaev:2008gi,Bekaert:2009fg,Alkalaev:2011zv,Bekaert:2013zya,Alkalaev:2018bqe} is based on $l_1$ of the form~\eqref{nabla+omega}.

Now we illustrate the flexibility of the formalism with more examples.  In the first example we consider the AKSZ formulation of Chern-Simons theory, which is viewed as a gauge PDE and show that the associated $L_\infty$-description is indeed the one recalled in the Example~\bref{ex:CS}.

\begin{example} \normalfont Let us consider a trivial $Q$-bundle $(E,Q,T[1]X)=(T[1]X,\dx)\times(\algg[1],q)$, where $(\algg[1],q)$ is CE complex of a Lie algebra $\algg$, considered as a $Q$-manifold. If $c^{a}, gh(c^{a})=1$ are coordinates on $\algg[1]$, homological vector field $q$ is given by $q=-\frac{1}{2}U^{a}_{bc}c^{b}c^{c}\frac{\partial}{\partial c^{a}}$, where $U^{a}_{bc}$ are structure constants of $\algg$. Then $Q$-structure on $E$ takes the form
\begin{equation}\label{l8}
		Q=\theta^{\mu}\frac{\partial}{\partial x^{\mu}}-\frac{1}{2}U^{a}_{bc}c^{b}c^{c}\frac{\partial}{\partial c^{a}}\,.
\end{equation}
Now let us consider a fixed solution $\sigma_0:T[1]X\rightarrow E$. If $(x^{\mu},\theta^{\mu})$ are local coordinates on $T[1]X$ this map can be written as $\sigma^*_0(c^{a})=\theta^{\mu}A^{a}_{\mu}(x)$. Introducing new coordinates $c^{\prime a}$ in a fiber such that $\sigma^*_0(c^{\prime a})=0$ and omitting the prime the $Q$-structure takes the form:
\begin{equation}\label{l31}		Q=\theta^{\mu}\left(\frac{\partial}{\partial x^{\mu}}-q^{a}_{\mu b}(x)c^{b}\frac{\partial}{\partial c^{a}}\right)-\frac{1}{2}U^{a}_{bc}c^{b}c^{c}\frac{\partial}{\partial c^{a}},
	\end{equation}
where $q^{a}_{\mu b}(x)\equiv U^{a}_{cb}A^{c}_{\mu}(x)$ is interpreted as the coefficients of the flat connection.
so that the first term can be considered as a flat covariant derivative in the representation of $\algg[1]$ on functions in $c^a$.

Homological vector field~\eqref{l31} determines an analogue of $L_{\infty}$ - structure on the space of sections of the vector bundle $\cN$ which is $VE \subset TE$ pulled back to $T[1]X$ by $\sigma_0$. Representing $u,v\in\Gamma(\cN)$ as vertical vector fields $u^a(x,\theta)\dl{c^a}$
and $v^a(x,\theta)\dl{c^a}$ and using~\eqref{l1inv} and \ref{HDB2} 
we get the following explicit expressions for $l_1$ and $l_2$ 
\begin{equation}\label{l32}
		l_1 (v^a \dl{c^a})=(\dx v^a+ \theta^\mu q_{\mu b}^a v^b)\dl{c^a}\,, \qquad l_2(u,v)=(-1)^{\p{u}}U_{ab}^c u^a v^b \dl{c^c}
\end{equation}
Note that $l_1$ plays a role of a linear connection in $\cN$ (note the sign difference with the respective term in \eqref{l31}, which arises because the linear term in $Q$ defines the covariant differential in the bundle dual to $\cN$),  while $l_2$ gives a Lie bracket on sections (i.e. the  Lie bracket in $\algg$ extended to $\Gamma(\cN)$).  Identifying sections of $\cN$
as $\algg$-valued differential forms on $X$, one can rewrite~\eqref{l32} as:
\begin{equation}
l_1 v= \dx v+\commut{A}{v}\,,
\qquad 
l_2(u,v)=(-1)^{\p{u}} \commut{u}{v}\,,
\end{equation}
in agreement with Example~\ref{ex:CS}. 
Note that here we deal with a slightly more general $L_\infty$ structure as it allows for a nontrivial background solution. Analogous considerations apply to any gauge system of AKSZ-type.
\end{example}

The next example is the gauge PDE representation of a PDE that is defined in terms of its equation manifold equipped with the Cartan distribution~\cite{Vinogradov1981}, see~\cite{Krasil?shchik-Lychagin-Vinogradov,Krasil'shchik:2010ij} for a review.

\begin{example} \normalfont PDEs can be seen as gauge PDEs, where the ghost degree is horizontal, i.e. all fiber coordinates have vanishing degree~\cite{Grigoriev:2019ojp}. Let $(E,Q,T[1]X)$
such a gauge PDE and $(x^\mu,\theta^\mu,\psi^A)$ be local coordinates on $E$, where $\psi^A$ are fiber coordinates. In this case a general $Q$ has the form
\begin{equation}\label{l9}
	Q=\theta^{\mu}\left(\frac{\partial}{\partial x^{\mu}}-\Gamma^A_{\mu}(x,\psi)\frac{\partial}{\partial \psi^A}\right),
\end{equation}
where $\Gamma^A_{\mu}(x,\psi)$ can be interpreted as coefficients of the connection $1$-form and $Q^2=0$ implies that the connection is flat, i.e. the Cartan distribution is involutive.  We again take a fixed section $\sigma_0:T[1]X\rightarrow E$ and introduce a new coordinate system $(x,\theta,\psi^A)$ such that $\sigma^*_0(\psi^A)=0$. Expanding \eqref{l9} about $\psi^{A}=0$ we get
\begin{equation}\label{l16}
	Q=\theta^{\mu}\left(\frac{\partial}{\partial x^{\mu}}-q^A_{\mu B}(x)\psi^B\frac{\partial}{\partial\psi^A}\right)-\sum_{i\geq 2}\frac{1}{i!}\theta^{\mu}q^A_{\mu B_1\cdots B_i}(x)\psi^{B_1}\cdots\psi^{B_i}\frac{\partial}{\partial \psi^{A}}\,.
\end{equation}
Again, the first term defines a linear connection on the associated vector bundle $\cN\rightarrow T[1]X$ induced by the section $\sigma_0$.  More precisely, identifying sections of $\cN$ as vector fields of the form $v=v^A(x,\theta)$ and applying \eqref{l1inv} to the above $Q$ gives:
\begin{equation}
l_1v=(\dx v^A+(-1)^{\p{v}}\theta^\mu q^A_{\mu B} v^B)\dl{\psi^A}
\end{equation}
where for simplicity we assumed $\p{\psi^A}=0$. In particular, solutions of the linearized equation are the degree $0$ section $v=v^A(x)e_A$ that are covariantly constant:
\begin{equation}
l_1v= (\dx v^A+\theta^\mu q^A_{\mu B} v^B)\,,
\end{equation}
in agreement with the standard picture and the definition of solutions as $Q$-sections. The higher maps $l_{i\geq 2}$ correspond to the nonlinearities of the underlying PDE. 
\end{example}

The last example is the AKSZ description of the \textit{off-shell higher spin fields} \cite{Vasiliev:2005zu,Barnich:2004cr,Grigoriev:2006tt}
\begin{example}
\normalfont Let $T^*V$ be cotangent bundle over $V\equiv \fR^n$. Let us denote by $y^a,p_b$ standard coordinates on $T^*V$ so that the 
canonical symplectic structure reads as $dp^a \wedge dy^a$.
This defines a Weyl star product for functions on $T^*V$; for instance, $[y^a,p_b]_*=\hbar\delta^a_b$.

We then extend our space by Grassmann odd ghost variable $c$, $gh(c)=1$ and take $\algA$ to be the algebra of formal power series in $\hbar$ with coefficients in functions in $y,p,c$ that are polynomials in $p_a,c$ with coefficients in formal power series in $y^a$. In what follows we do not explicitly indicate dependence on $\hbar$.

Consider a graded manifold $M_{\algA}$ associated with $\algA[1]$. Coordinates on this manifold are coefficients of the generating function $\Psi(y,p,c)=\textbf{A}(y,p)+c\textbf{F}(y,p)$.
The ghost degree prescription is such that $\gh{\Psi}=1$ and hence $gh(\textbf{A})=1$, $gh(\textbf{F})=0$.
$M_{\algA}$ is naturally a $Q$-manifold with the $q$-structure defined via:
\begin{equation}
q\Psi=-\frac{1}{2\hbar}\qcommut{\Psi}{\Psi}\,.
\end{equation}
We then take the following gauge PDE: $E\to T[1]X$, with $(E,Q)=(M_{\algA},q)\times (T[1]X,\dx)$, where $\dim{X}=n$ (in this case it is an AKSZ sigma model). In the component terms $Q$ reads as
\begin{equation}\label{hs2}
    Q(x^{\mu})=\theta^{\mu},\quad Q(\textbf{A})=-\frac{1}{2\hbar}[\textbf{A},\textbf{A}]_*, \quad Q(\textbf{F})=-\frac{1}{\hbar}[\textbf{A},\textbf{F}]_*.
\end{equation}
If $\sigma$ is a section determined by $\sigma^*(\textbf{A})=\theta^{\mu}{A}_{\mu}(x|y,p)\equiv A(x,\theta|y,p)$  and  $\sigma^*(\textbf{F})={F}(x|y,p)\equiv F(x|y,p)$ the equations of motion read as
\begin{equation}
\label{hs-eom}
    \dx A+\frac{1}{2\hbar}[A,A]_*=0,\qquad 
  \dx F+\frac{1}{\hbar}[A,F]_*=0\,,
\end{equation}
The gauge transformations read as
\begin{equation}
    \delta A=\dx\lambda+\frac{1}{\hbar}[A,\lambda]_*,\qquad 
    \delta F=\frac{1}{\hbar}[F,\lambda]_*\,,
\end{equation}
where $\lambda=\lambda(x|y,p)$ is a gauge parameter. These equations are familiar in the context of Fedosov deformation quantization  of cotangent bundles~\cite{Bordemann:1997ep,Fedosov2001} and in the higher spin theory context~\cite{Vasiliev:2005zu}. The above AKSZ-like description originates from~\cite{Grigoriev:2006tt}  (see also~\cite{Barnich:2004cr}).

We now pick a particular solution and spell-out explicitly the emerging $L_\infty$ structure. Namely, we take 
\begin{equation}\label{hs7}
    A_0=\theta^{\mu}\left(e^a_{\mu}p_a+\omega^a_{\mu b}y^bp_a\right), \quad F_0=\frac{1}{2}\eta^{ab}p_ap_b,
\end{equation}
where $e, \omega, \eta$ are identified as, respectively,  the vielbein, the Lorentz connection, and the standard Minkowski metric. The equation of motion imply:
\begin{equation}\label{hs8}
    \dx e^a+\omega^a_b e^b=0, \qquad \dx\omega^a_b+\omega^a_c\omega^c_b=0, \qquad \omega^a_c\eta^{cb}+\eta^{ac}\omega^b_c=0\,,
\end{equation}
so that the underlying geometry is flat and is that of a Minkowski space. Introducing new fiber coordinates (such that the vacuum solution \eqref{hs7} reads $\textbf{A}=0, \textbf{F}=0$ and expanding $Q$ in the fiber coordinates 
\begin{equation}
\begin{gathered}
    Q(\textbf{A})=-\frac{1}{2\hbar}[\textbf{A},\textbf{A}]_*-\theta^{\mu}\left(e^a_{\mu}\frac{\partial}{\partial y^a}+\omega^a_{\mu b}y^b\frac{\partial}{\partial y^a}-\omega^a_{\mu b}p_a\frac{\partial}{\partial p_b}\right)\textbf{A},\\
    Q(\textbf{F})=-\frac{1}{\hbar}[\textbf{A},\textbf{F}]_*-p^a\frac{\partial}{\partial y^a}\textbf{A}-\theta^{\mu}\left(e^a_{\mu}\frac{\partial}{\partial y^a}+\omega^a_{\mu b}y^b\frac{\partial}{\partial y^a}-\omega^a_{\mu b}p_a\frac{\partial}{\partial p_b}\right)\textbf{F}\,, \qquad Q(x^{\mu})=\theta^{\mu}\,.
\end{gathered}
\end{equation}
Applying the general procedure we infer an underlying $L_\infty$-structure on the section of $\cN$.  In the case at sections of $\cN$ are differential forms on $X$ with values in $\algA[1]$ which, in turn, is identified with functions in $y,p,c$.  We find
\begin{equation}\label{hs11}
	l_1(\Phi)=\left(\nabla+\sigma-cp^a\frac{\partial}{\partial y^a}\right)\Phi,\qquad 
 \qquad l_2(\Phi,\Phi)=\frac{(-1)^{\p{\Phi}}}{2\hbar}\qcommut{\Phi}{\Phi}\,,
\end{equation}
where $\Phi=\Phi(x,\theta|y,p,c)$ and  
\begin{equation}\label{hs12}
     \nabla=\theta^{\mu}\left(\frac{\partial}{\partial x^{\mu}}-\omega^a_{\mu b}y^b\frac{\partial}{\partial y^a}+\omega^a_{\mu b}p_a\frac{\partial}{\partial p_b}\right),\quad \sigma=-\theta^{\mu}e^a_{\mu}\frac{\partial}{\partial y^a}\,.
\end{equation}
The pair $(\Gamma(\cN),l_1)$ can be interpreted as the first-quantized constrained system describing a scalar relativistic particle. In so doing $\Phi=\Phi(x,\theta|c,y,p)\in\Gamma(\cN)$ is interpreted as a wave function of such a system and $l_1$ as its BFV-BRST operator, see \cite{Grigoriev:2006tt} for more details.

System $(\Gamma(\cN),l_1,l_2)$ describes the non-linear deformation of the linear system $(\Gamma(\cN),l_1)$ describing the off-shell higher spin fields  on the Minkowski space-time.  In order to describe on-shell higher-spin fields in this way one should require  $\Phi(x,\theta|c,y,p)$ to be totally traceless. In the present framework this can be achieved by subjecting $\Phi$ to the following algebraic conditions
\begin{equation}\label{hs13}
    \frac{\partial}{\partial y^a}\frac{\partial}{\partial p_a}\Phi=\frac{\partial}{\partial y^a}\frac{\partial}{\partial y_a}\Phi=\frac{\partial}{\partial p_a}\frac{\partial}{\partial p^a}\Phi=0\,.
\end{equation}
Note that the bilinear map $l_2$ does not descent to the above subspace in a consistent way, signalling the well known problem of higher spin interactions.

In the above example we took $e,\omega$ of the flat Minkowski space. However, any invertible frame field $e$ and a torsion-free Lorentz connection $\omega$ can be lifted to a solution of~\eqref{hs-eom}, using the procedure analogous to that employed in Fedosov quantization, see e.g.~\cite{Grigoriev:2006tt,Basile:2022nou}. The respective linearized system $(\Gamma(\cN),l_1)$ then describes off-shell higher spin fields on the curved background. Detailed study and applications of (various modifications of) this system can be found in~\cite{Grigoriev:2012xg,Bekaert:2017bpy,Basile:2022nou}.
\end{example}

\subsection{Lagrangians and presymplectic structures}
\label{sec:presymp}

So far we have focused on the description of local gauge systems at the level of equations of motion. Now we would like to briefly discuss how Lagrangian local BV systems can be described in the gauge PDE approach. It turns out that the BV Lagrangian and the BV symplectic structure can be encoded in a compatible presymplectic structure defined on the gauge PDE. 

More specifically,  let $(E,Q,T[1]X)$ be a gauge PDE. Vertical differential forms on $E$ are equivalence classes of forms modulo the ideal $\cI$ in the exterior algebra, which is generated by elements of the form $\pi^*\alpha$,
where $\alpha$ is a form of positive form degree on $T[1]X$. Note that the notion is general and applies to generic fiber bundles as well. Note also that the de Rham differential and the Lie derivative along projectable vector fields are well defined on equivalence classes (vector field $Y$ on $E$ projects to $y$ on $T[1]X$ if $\pi^*\circ y=Y \circ \pi^*$).

By definition a compatible presymplectic structure on $(E,Q,T[1]X)$ is a closed vertical 2-form $\omega$ of ghost degree $\dim{X}-1$ satisfying  $d\omega=0$ and $L_Q \omega \in \cI$. We also fix a presymplectic potential $\chi$ such that $\omega=d\chi$ and assume that $i_Q L_Q \omega=0$. It follows there exists $\hL$ such that 
$i_Q\omega+d\hL\in \cI$ and we have all the ingredients to
define BV action as
   \begin{equation}
\label{aksz-action}
S_{BV}[\hat\sigma]=\int_{T[1]X} \hat\sigma^*(\chi)(\dx) +\hat\sigma^*(\hL) \,, 
\end{equation}
where $\hat\sigma$ is a super-section of $E \to T[1]X$. In this form the expression is identical to that of the AKSZ action but the present framework is more general. The classical action is obtained by replacing $\hat\sigma$ with
a section $\sigma$ or, equivalently, setting to zero all the nonvanishing ghost degree coordinates on the space of supersection. 

The interpretation of the above BV-AKSZ-like action is that one should take a symplectic quotient of the space of supermaps for the AKSZ-like  presymplectic structure, induced by $\omega$ on the space of super-sections, giving a genuine local BV system whose interpretation is standard. Of course one should assume that the induced presymplectic structure is regular.  For further details on the presymplectic gauge PDEs we refer to~\cite{Dneprov:2022jyn,Grigoriev:2022zlq} (see also~\cite{Alkalaev:2013hta,Grigoriev:2016wmk,Grigoriev:2020xec} for earlier relevant works).

As we have already seen, given a fixed solution  $\sigma_0$ to $(E,Q,T[1]X)$ the space of sections of the associated vector bundle $\cN=\sigma_0^*(VE)$ is equipped with $L_\infty$-structure (as before we restrict to local analysis). It is clear that $\omega$ defines a fiberwise symplectic form $\omega_0$ on $\cN$.
In fact it also defines a higher multilinear maps (higher jets of $\omega$). More precisely, working in coordinates  $x^\mu,\theta^\mu,\psi^A$ such that $\sigma^*_0(\psi^A)=0$ 
the presymplectic form can be written as
\begin{equation}\label{sm1}
    \omega=\frac{1}{2}d\psi^Ad\psi^B\omega_{AB}(x,\theta,\psi),
\end{equation}
and its Taylor expansion in fiber coordinates reads as:
\begin{equation}\label{sm3}    \omega=\omega_0+\omega_1+\omega_2+\dots=\sum_{i\geq 0}\omega_i,
\end{equation}
where a superscript denotes homogeneity in the fiber coordinates.  While $\omega_0$ is a bilinear (over $\fu(T[1]X)$) form on $\Gamma(\cN)$, $\omega_{k>0}$
can be interpreted as  higher multilinear maps $\Gamma^{\tensor (k+2)}(\cN)\to \Gamma(\cN)$. Note that, in general, these maps are  defined only locally and depend on the choice of trivialisation.

The compatibility conditions, $\omega_i$ \ruth{ought to satisfy,} are obtained by expanding $L_Q\omega\in \cI$.  In particular, for the linear operator $\bar\Omega=l_1-\dx$ determined by the linear part of $Q$ one gets:
\begin{equation}
\begin{gathered}
\omega_0(\bar\Omega V,W)+(-1)^{\p{V}}\omega_0(V, \bar\Omega W)=(\dx \omega_0)(V,W)\,,\\ (\dx \omega_0)(V,W)=
\dx (\omega_0(V,W))- \omega_0(\dx{V},W)
-(-1)^{\p{V}}\omega_0(V,\dx{W})\,.
\end{gathered}
\end{equation}
The first relation is naturally interpreted as the $\bar\Omega$-invariance of the cyclic structure modulo total derivatives.

Before giving a simple example of this structure let us note that the above interpretation of presymplectic gauge PDEs suggests that all the coordinates which are in the kernel of $\omega$ can be safely ignored as the respective fields never show up in the BV master action.  This suggests the notion of a weak presymplectic gauge PDE, where the condition $Q^2=0$ is replaced with $\omega(Q,Q)=0$, see~\cite{Grigoriev:2022zlq} for more details. In the $L_\infty$ language this corresponds to "pre-$L_\infty$" structure encoded by such $Q$ and possibly degenerate compatible bilinear forms $\omega_i$. Note that for such weak presymplectic gauge PDEs the solutions are defined as those of the Lagrangian system with action~\eqref{aksz-action}, i.e. $\sigma_0$ is a solution if $\sigma_0^*(\omega_{AB})(\dx \sigma_0^*(\psi^B)-\sigma_0^*(Q^B))=0$.

\begin{example}
\normalfont Consider as $\cN$ a trivial vector bundle over $T[1]X$, with $X$ a Minkowski space, with the fiber being $\left((X^*\wedge X^*)\oplus \fR[-1]\right)\tensor \algg$, where $\algg$ is a reductive Lie algebra equipped with the nondegenerate invariant inner product denoted by $\tr$. A generic section of this bundle can be written as:
\begin{equation}
\Phi=C(x,\theta)\tensor e_1+F_{\mu\nu}(x,\theta)\tensor e_0^{\mu\nu}
\end{equation}
where $e_1,\gh{e_1}={-1}$ and $e^{\mu\nu}_0, \gh{e^{\mu\nu}_0}=0$ denote basis in $\fR[-1]$ and $X^*\wedge X^*$ respectively,
and where by some abuse of notations $F_{\mu\nu},C$ are $\algg$-valued. 

Thanks to the linearity over $\theta^\mu$ it is enough to define pre-$L_\infty$ maps for sections of the form $\alpha(x) e_1$
and $\beta_{\mu\nu}(x) e_0^{\mu\nu}$:
\begin{equation}
\label{l-maps-YM}
\begin{gathered}
l_1(\alpha e_1)=(\dx\alpha)\, e_1\,,\qquad 
l_1(\beta_{\mu\nu} e^{\mu\nu}_0)=(\dx \beta_{\mu\nu})\,e^{\mu\nu}_0-\half \theta^\mu\theta^\nu \beta_{\mu\nu} e_1\\
l_2(\alpha e_1,\gamma e_1)=-\commut{\alpha}{\gamma} e_1\,,\qquad
l_2(\beta_{\mu\nu}e^{\mu\nu}_0,\alpha e_1)= 2\commut{\beta_{\mu\nu}}{\alpha} e_0^{\mu\nu}\,,\qquad
l_2(\beta_{\mu\nu}e^{\mu\nu}_0,\gamma_{\rho\sigma} e^{\rho\sigma}_0)=0\,.
\end{gathered}
\end{equation}
Here and below we drop the tensor product sign.
Note that $l_1,l_2$ does not satisfy the Leibniz rule. For instance for $v=B_{\mu\nu}(x)e^{\mu\nu}_0$ one gets  
$l_1(l_2(v,v)+l_2(l_1(v),v)+l_2(v,l_1(v))=\theta^{\mu}\theta^{\nu}[B_{\mu\nu},B_{\rho\sigma}]e^{\rho\sigma}_0$. As we are going to see the above $l_1,l_2$ are determined by a $Q$-structure whose square is not equal to zero. 

The symplectic structure is field-independent, i.e. $\omega=\omega_0$ and is given by
\begin{equation}
\label{omega-YM}
\omega(\alpha e_1, \beta e_1)=0\,, \qquad \omega(\alpha e_0^{\mu\nu}, \beta e_0^{\rho\sigma})=0 \,, \qquad \omega(\alpha e_0^{\mu\nu}, \beta e_1)=\frac{1}{2}\tr{(\alpha\beta)}\epsilon^{\mu\nu}{}_{\rho\sigma}\theta^\rho\theta^\sigma \,, 
\end{equation}
where $\alpha,\beta$ are generic $\algg$-valued functions in $x,\theta$ and for simplicity we restricted ourselves to $dim(X)=4$, extension to generic dimension $>1$ is straightforward. Now, the analog of an $L_\infty$-action functional can be written as
\begin{equation}\label{YM4}
    S[\Phi^{(0)}]=\int d^4xd^4\theta\left(\frac{1}{2}\omega(\Phi^{(0)},l_1 \Phi^{(0)})+\frac{1}{3!}\omega(\Phi^{(0)},l_2(\Phi^{(0)},\Phi^{(0)}))\right)\,.
\end{equation}
where $\Phi^{(0)}=A_\mu(x)\theta^\mu e_1+B_{\mu\nu}(x)e_0^{\mu\nu}$ is a component of $\Phi$ of total ghost degree $0$. In terms of component fields one gets (modulo total derivative):
\begin{equation}\label{YM5}
    S[A,B]=\int d^4x\tr\left(B^{\mu\nu}(\partial_{\mu}A_{\nu}-\partial_{\nu}A_{\mu}+\commut{A_{\mu}}{A_{\nu}}\right)-\frac{1}{2}B^{\mu\nu}B_{\mu\nu}),
\end{equation}
which is a standard first-order action of the YM theory, see e.g.~\cite{McKeon:1994ds}.

Let us for completeness give a description of the above system as a presymplectic gauge PDE. The underlying bundle $E$ is just $\cN$ where the fibers are now understood as graded manifolds rather than graded linear spaces. Denoting the fiber coordinates associated to $e_1$ and $e^{\mu\nu}_0$ by $C$ and $F_{\mu\nu}$ respectively, the pre-$Q$-structure is defined as
\begin{equation}\label{YM1}
\begin{gathered}     Q(x^{\mu})=\theta^{\mu},\quad Q(C)=-\frac{1}{2}[C,C]+\frac{1}{2}\theta^{\mu}\theta^{\nu}F_{\mu\nu},\qquad 
   Q(F_{\mu\nu})=[F_{\mu\nu},C]
\end{gathered}
\end{equation}
and the presymplectic structure is given by \cite{Alkalaev:2013hta}:
\begin{equation}\label{YM3}
\omega=d\chi\,, \qquad   \chi=\frac{1}{2}\tr(\epsilon_{\mu\nu\rho\sigma}\theta^\rho\theta^\sigma 
 F^{\mu\nu}\, dC)\,.    
\end{equation}
Note that $Q^2\neq 0$ but the weaker condition holds: $\omega(Q,Q)=0$, see~\cite{Grigoriev:2022zlq}. It is easy to check that with this choice of $E,X,Q,\chi$  action~\eqref{aksz-action} indeed gives first-order action \eqref{YM5}. Taking symplectic quotient of the space of super-sections results in the full-scale BV formulation of the YM theory, see~\cite{Grigoriev:2022zlq} for more details.

Finally, the $L_\infty$-description based on~\eqref{l-maps-YM},\eqref{omega-YM} is obtained by expanding around the  vacuum solution $\sigma_0$ defined by $\sigma_0^*(C)=0,\sigma_0^*(F_{\mu\nu})=0$.

\end{example}

\section*{Acknowledgements} 
We wish to thank Alex Arvanitakis, 
Athanasios Chatzistavrakidis, Larisa Jonke, Ivo Sachs,  Richard Szabo, Alexander Verbovetsky, Victor Lekeu, Theodore Voronov, and especially Georgy Sharygin. The work was supported by the Russian Science Foundation grant 18-72-10123. Part of this work was done when MG participated in the program ``Higher Structures, Gravity and Fields'', Mainz Institute for Theoretical Physics, Johannes Gutenberg University, Mainz, Germany.

\appendix
\section{Globalisation of the $L_\infty$ structure}
\label{app:Fedosov}

Here we briefly recall the Fedosov-like construction for cotangent bundles and show that it can be used to obtain a globally well-defined curved $L_\infty$-structure on $\Gamma(TM)$ induced by the homological vector field $Q$ on $M$.

Let $M$
be a graded supermanifold and let $\Gamma$ be a torsion-free affine connection in $TM$.
Consider the vector bundle $W(M)=\hat S(T^*M) \tensor TM$, whose fiber we identify with a Lie algebra of formal vector fields on the tangent space. Here $\hat S (T^*M)$ denotes a completion of the graded symmetric tensor algebra of $T^*M$.   There is a natural  Lie bracket  $\commut{-}{-}_W$ on $\Gamma(W)$ determined by the fiberwise Lie bracket of formal vector fields. To make contact with the  version~\cite{Bordemann:1997er,Fedosov2001} (see also~\cite{Grigoriev:2006tt,Grigoriev:2016bzl,Basile:2022nou} for applications in the field theory context) of Fedosov quantization~\cite{Fedosov:1994}
specialised to the case of  cotangent bundles, note the identification of $\Gamma(W)$ with a Lie subalgebra of $\hat S(T^*M) \tensor S(TM)$, equipped with the standard fiberwise Weyl star-product.

The standard fact is that there exists a flat nonlinear connection $D$ in $W$ of the following form
\begin{equation}
D=\nabla+\commut{r}{\cdot}_W\,, \qquad D(Dv)=0\,\quad \forall v\in\Gamma(W)
\end{equation}
where $\nabla$ is the covariant differential in $\Gamma(W)$ determined by $\Gamma$ extended to $W$ and $r$ is a $W$-valued $1$-form on $M$.  The existence of such $r$  is a simple corollary of the Fedosov-like construction for the cotangent bundles: in this case the minimal Fedosov-like connection takes values in $W$ seen as subbundle of the Weyl bundle (i.e. is linear in momenta).

 Furthermore, Any section $v_0$ of $TM$ has a unique extension to $v\in\Gamma(W)$ such that $Dv=0$ and $v|_{TM}=v_0$, where $v|_{TM}$ denotes a projection of $v$ to $\Gamma(TM)$ seen as a direct summand in $\Gamma(W)$. Moreover,
\begin{equation}
\left(\commut{v}{u}_{W}\right)\big|_{TM}=\commut{v_0}{u_0}\,,
\end{equation}
where $\commut{v_0}{u_0}$ is the standard Lie bracket on $\Gamma(TM)$. Note that $D\commut{v}{u}_W=0$ because by construction $Dv=0=Du$ and both $\nabla$ and $\commut{r}{-}_W$ differentiate $\commut{-}{-}_W$.

Let $v_0,u_0$ are two  vector fields on $M$  such that $\commut{u_0}{v_0}=0$. Their unique covariantly constant lifts $v,u$ satisfy $\commut{v}{u}_W=0$ because $D\commut{v}{u}_W=0$ and $\commut{v_0}{w_0}=0$ by the assumption. If $M$ is a $Q$-manifold, we apply this to $v_0=u_0=Q$ and find that $\commut{q}{q}_W=0$, where $q \in \Gamma(W)$ is a unique lift of $Q$ satisfying $Dq=0$.  At the same time, this is equivalent to having a formal homological vector field on each fiber of $TM$, giving a curved-$L_\infty$ structure on $\Gamma(TM)$.

\section{Notations and conventions}

Consider a $\mathbb{Z}$-graded vector space $\linf:=\bigoplus_{k\in\mathbb{Z}}\linf_k$. Elements of subspace $\linf_k$ are called homogeneous of degree $k$, and we shall denote the Grassmann parity by $|\phi|$, where $\phi\in \linf$. The degree shift by $k$ is defined as: $(\linf[k])_l=\linf_{k+l}$.
For example, if $\linf$ is a real vector space, then the degree-shifted vector space $\linf[1]$ consists of vectors degree $-1$ while coordinates on the associated graded manifold $\manM_\linf$ are of degree $1$.

Let $\manM$ be a $Z$-graded supermanifold. We use the Koszul sign convention, i.e. the graded commutativity reads as $ab=(-1)^{|a||b|}ba$.  Vector fields on $\manM$ act on functions as left derivations, i.e. $V(fg)=(Vf)g+(-1)^{|f|}f(Vg)$. Let $\psi^A$ be a local coordinates on $\manM$ then vector filed $V$ can be written as $V=V^A(\psi)\dl{\psi^A}$.

It is convenient to identify differential form on $\manM$ as functions on the odd tangent bundle $T[1]\manM$, which are polynomial in fiber coordinates. Introducing coordinates $(\psi^A,d\psi^A)$, where $\psi^A$ are base coordinates,the coordinate expression for the de Rham differential reads as
\begin{equation}\label{A1}
    df(\psi)=d\psi^A\frac{\partial f}{\partial \psi^A},
\end{equation}
where $\frac{\partial}{\partial \psi^A}$ denotes the left derivative.

The components components of a differential form are introduced as follows:
\begin{equation}\label{A5}
    \alpha(\psi,d\psi)=\frac{1}{n!}d\psi^{A_n}\dots d\psi^{A_1}\alpha_{A_1\dots A_n}(\psi).
\end{equation}
It follows the graded antisymmetry property of components is determined by $d\psi^Ad\psi^B=(-1)^{(|A|+1)(|B|+1)}\\d\psi^Bd\psi^A$. For example, the components of $2$-form have the following symmetry property:
\begin{equation}\label{A6}
    \omega_{AB}=(-1)^{(|A|+1)(|B|+1)}\omega_{BA}.
\end{equation}

The coordinate expression for the interior product reads as
\begin{equation}\label{A2}
    i_V=V^A(\psi)\frac{\partial}{\partial d\psi^A},
\end{equation}
where $\frac{\partial}{\partial d\psi^B}d\psi^A=\delta^A_B$. It is clear that $Vf=i_V df$ for any function $f\in C^{\infty}(\mathcal{M})$. Note that $|i_V|=|V|-1$. 
The components of $\alpha$ can be expressed as follows
\begin{equation}\label{A3}
    \alpha_{A_1\dots A_n}=i_{A_1}\dots i_{A_n}\alpha(\psi,d\psi), \quad i_{A}=\frac{\partial}{\partial d\psi^A}.
\end{equation}
The Lie derivative is defined by
\begin{equation}\label{A4}
    L_V=i_Vd+(-1)^{|V|}di_V\,.
\end{equation}
Note that $|L_V|=|V|$.

\setlength{\itemsep}{5pt}
\small


\providecommand{\href}[2]{#2}\begingroup\raggedright\endgroup

\end{document}